\newenvironment{Shaded}{}{}
\newcommand{\KeywordTok}[1]{\textcolor[rgb]{0.00,0.44,0.13}{\textbf{#1}}}
\newcommand{\DataTypeTok}[1]{\textcolor[rgb]{0.56,0.13,0.00}{#1}}
\newcommand{\DecValTok}[1]{\textcolor[rgb]{0.25,0.63,0.44}{#1}}
\newcommand{\CommentTok}[1]{\textcolor[rgb]{0.38,0.63,0.69}{\textit{#1}}}
\newcommand{\NormalTok}[1]{#1}
\crefname{lemma}{Lemma}{Lemmas}
\crefname{ex}{Example}{Examples}
\crefname{app}{Appendix}{Appendix}
\providecommand{\tightlist}{%
  \setlength{\itemsep}{0pt}\setlength{\parskip}{0pt}}
\title{Dependent Session Types}
\titlerunning{Dependent Session Types}
\author[1]{Hanwen Wu}
\affil[1]{Boston University\\\texttt{hwwu@cs.bu.edu}}
\author[2]{Hongwei Xi}
\affil[2]{Boston University\\\texttt{hwxi@cs.bu.edu}}
\authorrunning{H. Wu and H. Xi}
\subjclass{F.1.2 Modes of Computation: Parallelism and concurrency; F.4.1 Mathematical Logic: Lambda calculus and related systems}
\keywords{session types, linear types, dependent types, lambda calculus}
\begin{document}


\newcommand{\Lzero}[0]{\mathcal{L}_0}
\newcommand{\Ldep}[0]{\mathcal{L}_{\forall,\exists}}
\newcommand{\Lchan}[0]{\mathcal{L}_{\forall,\exists}^{\pi}}

\def\ATS{$\mathcal{A}\mathcal{T}\kern-.1em\mathcal{S}$}
\def\res{\mathcal{R}}

\newcommand{\defeq}[0]{\coloncolonequals}
\newcommand{\dom}[0]{\text{\rm\bf dom}}

\newcommand{\sort}[1]{\text{\rm\it #1}}
\newcommand{\type}[1]{\text{\rm\bf #1}}
\newcommand{\stype}[1]{\text{\rm\tt #1}}
\newcommand{\subtype}[0]{\leq_{ty}}
\newcommand{\sta}[1]{\text{\rm\it #1}}
\newcommand{\dyn}[1]{\text{\rm\bf #1}}
\newcommand{\guard}[0]{\,{\supset}\,}
\newcommand{\guardi}[0]{{\supset^+}}
\newcommand{\guarde}[0]{{\supset^-}}
\newcommand{\foralli}[0]{\forall^+}
\newcommand{\foralle}[0]{\forall^-}
\newcommand{\assert}[0]{{\wedge}}

\newcommand{\sig}[0]{\mathcal{S}}
\newcommand{\dcx}[0]{\textit{dcx}}
\newcommand{\dcc}[0]{\textit{dcc}}
\newcommand{\dcf}[0]{\textit{dcf}}
\newcommand{\dcr}[0]{\textit{dcr}}
\newcommand{\scx}[0]{\textit{scx}}
\newcommand{\scc}[0]{\textit{scc}}
\newcommand{\scf}[0]{\textit{scf}}

\newcommand{\chan}[0]{\type{chan}}
\newcommand{\msg}[0]{\stype{msg}}
\newcommand{\quan}[0]{\stype{quan}}
\newcommand{\fix}[0]{\stype{fix}}
\newcommand{\nil}[0]{\stype{end}}
\newcommand{\branch}[0]{\stype{branch}}
\newcommand{\ite}[0]{\stype{ite}}


\newcommand{\cli}[0]{\stype{C}}
\newcommand{\srv}[0]{\stype{S}}


\makeatletter
\setlength{\@fptop}{0pt}
\setlength{\@fpsep}{\floatsep}
\setlength{\@fpbot}{0pt plus 1fil}
\makeatother

\maketitle

\begin{abstract}
Session types offer a type-based discipline for enforcing communication
protocols in distributed programming. We have previously formalized
simple session types in the setting of multi-threaded
\(\lambda\)-calculus with linear types. In this work, we build upon our
earlier work by presenting a form of dependent session types (of
DML-style). The type system we formulate provides linearity and duality
guarantees with no need for any runtime checks or special encodings. Our
formulation of dependent session types is the first of its kind, and it
is particularly suitable for practical implementation. As an example, we
describe one implementation written in ATS that compiles to an
Erlang/Elixir back-end.
\end{abstract}

\section{Introduction}\label{sec:introduction}

A session is a sequence of interactions among concurrently running
programs. We assign session types
\cite{Honda:1993eh,Honda:1998fm,Takeuchi:1994bv,Honda:2008hi} to
communication channels to ensure session fidelity, which means each
participant in the session communicates according to a chosen protocol.
Recent works
\cite{Wadler:2012ua,Wadler:2014bta,Caires:2010gi,Carbone:2015hl,Carbone:2016kd}
have established a form of Curry-Howard correspondence where logical
propositions are interpreted as session types for terms in variants of
\(\pi\)-calculus \cite{Milner:1992fc,Milner:1992gv}.

Instead of \(\pi\)-calculus, it is also possible to formulate session
types in the setting of \(\lambda\)-calculus
\cite{Lindley:2015cr,Xi:2016ue}. This paper formulates a form of
dependent session types by extending our prior work \cite{Xi:2016ue}.

More specifically, the formulation is based on Applied Type Systems
(\ATS~\cite{Xi:2003kl,Xi:2004vt}), a type system supporting dependent
types (of DML-style \cite{Xi:1999bh}), linear types, and programming
with theorem proving. \ATS~takes a layered approach to dependent types
in which \emph{statics}, where types are formed and reasoned about, are
completely separate from \emph{dynamics}, where programs are constructed
and evaluated. Based on \ATS, session protocols are then captured by
extending statics with session types (static terms of sort
\(\sort{stype}\)), while communication channels are \emph{linear}
dynamic values whose types are \emph{indexed} by such session protocols.

When compared to other similar works (e.g. \cite{Toninho:2011in}), a
very important difference of our formulation is that our session types
describe the intended behavior \emph{globally}, instead of using a
polarized presentation where dual session types are used to describe
dual endpoints of a channel \emph{locally}. This is especially so when
quantifiers are involved.

Suppose that we want to provide an equality testing service, which
receives two integers \(m\) and \(n\), and then sends out a boolean
value indicating whether they are equal. Let us use \emph{roles} 0
(server) and 1 (client), to refer to the two endpoints of a channel. We
may use \texttt{S} for 0 and \texttt{C} for 1. We use \texttt{equal} for
the following (static) term which describes the protocol for the
equality testing service,
\[\stype{equal}\defeq\msg(\cli,\type{int})\coloncolon\msg(\cli,\type{int})\coloncolon\msg(\srv,\type{bool})\coloncolon\nil(\srv)\]

We use \(\msg(r,\hat\tau)\) to mean that the endpoint \(r\) (more
precisely, the party holding endpoint \(r\)) is to send a (linear) value
of type \(\hat\tau\), and \(\coloncolon\) for chaining, and \(\nil(r)\)
to mean that the endpoint \(r\) is to initiate the termination of the
session (while the other side waits for it). With dependent session
types, \texttt{equal} can be given a more precise definition as follows,
\begin{multline*}
\stype{equal}\defeq\quan(\cli,\lambda m{:}\sort{int}.\quan(\cli,\lambda n{:}\sort{int}.\\
\msg(\cli,\type{int}(m))\coloncolon\msg(\cli,\type{int}(n))\coloncolon\msg(\srv,\type{bool}(m=n))\coloncolon\nil(\srv)))
\end{multline*}
where \(\quan\) is a \emph{global} encoding of quantifiers. For any role
\(r\), \(\quan(r,\cdot)\) means universal quantification at endpoint
\(r\), and dually, existential quantification at the other endpoint
(\(1-r\)). In \texttt{equal}, \texttt{quan} means universal
quantification at the client side, meaning the client process can send
any integers onto the endpoint. Dually, \texttt{quan} refers to
existential quantification at the server side, indicating that the
server process can only send back a boolean value representing the
equality of the two \emph{received} integers. Note that \(\type{int}\)
and \(\type{bool}\) are type constructors (static functions of c-sort
\(\sort{int}\Rightarrow\sort{type}\) and
\(\sort{bool}\Rightarrow\sort{type}\), respectively) while
\(\sort{int}\) and \(\sort{bool}\) are \emph{sorts} for static terms.
Both \(\type{int}(i)\) and \(\type{bool}(b)\) are singleton types
representing values that equal \(i\) and \(b\), respectively. In ATS,
which uses ML-like syntax, an example program of the type
\(\chan(\srv,\stype{equal})\rightarrow\type{1}\) that provides such
service on the server side endpoint can be written as follows,

\begin{Shaded}
\begin{Highlighting}[]
\KeywordTok{fun}\NormalTok{ eq_test (ch:chan(S,equal)): void = }\KeywordTok{let} 
    \KeywordTok{prval}\NormalTok{ () = exify ch }\CommentTok{(* prval denotes a proof value, that will be *)}
    \KeywordTok{prval}\NormalTok{ () = exify ch }\CommentTok{(* erased after type-checking                *)}
      \KeywordTok{val}\NormalTok{  m = recv  ch }
      \KeywordTok{val}\NormalTok{  n = recv  ch }
      \KeywordTok{val}\NormalTok{ () = send (ch, m = n)}
\KeywordTok{in}\NormalTok{ close ch }\KeywordTok{end}
\end{Highlighting}
\end{Shaded}

Let us use this code sample to introduce some key concepts. We use
(linear) \emph{channels} for communication. A channel consists of two
\emph{endpoints}. When one process sends a value onto one endpoint, the
value is automatically transmitted to the other endpoint of the channel.
\texttt{ch} is one such endpoint of the channel at party \texttt{S},
whose type is \(\chan(\srv,\stype{equal})\). The \emph{linear} type
constructor, \(\chan\), will construct a linear type \(\chan(r,\pi)\)
given a role \(r\) and a global session type \(\pi\). The combination of
\(r\) and \(\pi\) is where a global session type gets ``projected''
locally. This can be used to type an endpoint of a channel at party
\(r\). As \texttt{equal} is globally quantified by session type
constructor \texttt{quan}, we need to locally interpret it at party
\texttt{S}, by calling a session API \texttt{exify} twice, which
essentially turns \(\chan(\srv,\stype{equal})\) into \[
\exists m{:}\sort{int}.\exists n{:}\sort{int}.
\chan(\srv,\msg(\cli,\type{int}(m))\coloncolon\msg(\cli,\type{int}(n))\coloncolon\msg(\srv,\type{bool}(m=n))\coloncolon\nil(\srv))
\] for use with other session API, e.g. \texttt{recv}. The \emph{guard}
in the signature of \texttt{exify} (see \cref{fig:sessionapi}),
\(r\neq r_0\), specifies that, for any \(\quan(r_0,\cdot)\) at endpoint
\(\chan(r,\cdot)\), only when \(r\neq r_0\) is true that \texttt{exify}
can be invoked to turn \(\chan(r,\quan(r_0,\cdot))\) into
\(\exists a{:}\sigma.\chan(r,\cdot)\). Dually, before the client can use
the channel to send two integers, it has to locally interpret
\texttt{quan} at party \texttt{C}, by calling \texttt{unify} (see
\cref{fig:sessionapi}) whose guard is \(r=r_0\), which is the dual of
\texttt{exify} since roles can only be 0 or 1 in a binary session. It
will turn the endpoint at the client side into \[
\forall m{:}\sort{int}.\forall n{:}\sort{int}.
\chan(\cli,\msg(\cli,\type{int}(m))\coloncolon\msg(\cli,\type{int}(n))\coloncolon\msg(\srv,\type{bool}(m=n))\coloncolon\nil(\srv))
\] Essentially, a universally quantified endpoint \emph{inputs} a static
term from the user to eliminate the quantifier, while an existentially
quantified endpoint \emph{outputs} the witness to the user to eliminate
the quantifier. Note that the user of an endpoint is the process holding
such endpoint as mentioned above. So ``inputs from the user'' means the
user writes a program to \emph{send} a value using the endpoint. Such a
twist is found in other works as well, e.g.
\cite{Wadler:2012ua,Wadler:2014bta}.

The main contribution of this paper lies in the formulation of a form of
dependent session types (of DML-style) in the setting of
\(\lambda\)-calculus, which is the first of its kind. In particular,
this formulation is based on unpolarized presentation. Our technical
results include preservation and progress properties, which guarantee
session fidelity and deadlock-freeness. We also mention at the end an
implementation of our system that targets Erlang/Elixir.

The rest of the paper is organised as follows. \cref{sec:mtlc-linear}
briefly sets up multi-threaded \(\lambda\)-calculus with linear types,
denoted as \(\Lzero\). \cref{sec:mtlc-linear-dep} introduces
\emph{predicatization} to extend \(\Lzero\) into multi-threaded
\(\lambda\)-calculus with dependent types and linear types, denoted as
\(\Ldep\). \cref{sec:session-types} further extends \(\Ldep\) to
formulate dependent session types as \(\Lchan\).
\cref{sec:implementations} describes technical details of our
implementations. \cref{sec:examples} demonstrates the benefits of
dependent session types through examples. We then mention extensions
(multi-party sessions, polymorphism, etc) in \cref{sec:ext}, related
works in \cref{sec:related-works} and finally conclude in
\cref{sec:conclusion}.

\section{\texorpdfstring{Multi-threaded \(\lambda\)-calculus with Linear
Types}{Multi-threaded \textbackslash{}lambda-calculus with Linear Types}}\label{sec:mtlc-linear}

The formulation of multi-threaded \(\lambda\)-calculus with linear types
is largely standard and follows exactly from our previous work
\cite{Xi:2016ue} except for some minor cosmetic changes. Therefore, we
only present it very briefly and refer the readers to our prior work for
details.

\subsection{Syntax}\label{syntax}

\begin{figure}[h!]
\caption{Syntax of Multi-threaded $\lambda$-calculus with Linear Types}\label{fig:mtlc0syntax}
\[
\begin{array}{rrcl}
\text{types}                 \quad & \tau     & \defeq & \delta \mid \type{1} \mid \tau_1 \times \tau_2 \mid \hat\tau_1 \rightarrow \hat\tau_2                                   \\
\text{vtypes}                \quad & \hat\tau & \defeq & \hat\delta \mid \tau \mid \hat\tau_1\otimes\hat\tau_2 \mid \hat\tau_1\multimap\hat\tau_2                        \\
\text{dynamic constants}     \quad & \dcx     & \defeq & \dcc \mid \dcf                                                                                                  \\
\text{dynamic terms}         \quad & e        & \defeq & x \mid \dcx\,(\vv{e}) \mid \dcr \mid \langle \rangle \mid \langle e_1, e_2 \rangle \mid                         \\
                                   &          &        & \dyn{let}~\langle x_1,x_2 \rangle = e_1~\dyn{in}~e_2  \mid \dyn{if}~e~\dyn{then}~e_1~\dyn{else}~e_2 \mid        \\                                                          
                                   &          &        & \dyn{fst}(e) \mid \dyn{snd}(e) \mid \dyn{lam}~x.e \mid \dyn{app}(e_1, e_2)                                      \\
\text{dynamic values}        \quad & v        & \defeq & x \mid \dcc\,(\vv{v}) \mid \langle\rangle \mid \langle v_1,v_2 \rangle \mid \dyn{lam}~x.e                       \\
\text{dynamic type context}  \quad & \Gamma   & \defeq & \varnothing \mid \Gamma, x : \tau                                                                               \\
\text{dynamic vtype context} \quad & \Delta   & \defeq & \varnothing \mid \Delta, x : \hat\tau                                                                           \\
\text{dynamic signature}     \quad & \sig     & \defeq & \varnothing \mid \sig, \dcx: (\vv{\tau}) \Rightarrow \tau \mid \sig, \dcx: (\vv{\hat\tau}) \Rightarrow \hat\tau \\
\text{dynamic substitutions} \quad & \theta   & \defeq & [] \mid \theta[x \mapsto v]                                                                                     \\
\text{pools}                 \quad & \Pi      & \defeq & [] \mid \Pi[t \mapsto e]                                                                                       
\end{array}
\]
\end{figure} The syntax is shown in \cref{fig:mtlc0syntax} which
is mostly standard. \(\delta\)/\(\hat\delta\) are non-linear/linear base
types. ``vtype'' is just linear type. Note that a type \(\tau\) is also
a linear type \(\hat\tau\), but it is not regarded as a \emph{true}
linear type. \(\dcc\)/\(\dcf\) are dynamic constant
constructors/functions (pre-defined constructors/functions). \(\dcr\)
are dynamic constant resources that are treated \emph{linearly}.
\(\sig\) are dynamic signatures that assign types to dynamic constants,
and these types are called \emph{c-types}. Note that \(\vv{\cdot}\)
stands for a possibly empty sequence of \(\cdot\), i.e. \(\vv{e}\) is a
possibly empty sequence of dynamic terms. \(\dcx\,(\vv{e})\) is a term
of type \(\tau\) if \emph{dcx} is a constant of c-type
\((\tau_1,\dotsc,\tau_n)\Rightarrow\tau\) in \(\sig\) and for each
\(e_i~(1 \leq i \leq n)\) in \(\vv{e}\), \(e_i\) has type \(\tau_i\).

We use \([]\) for the empty mapping and
\([a_1,\dotsc,a_n \mapsto b_1,\dotsc,b_n]\) for a mapping that maps
\(a_i\) to \(b_i\) for \(1\leq i\leq n\), in which case we write
\(m(a_i)\) to mean \(b_i\). We use \(\dom(m)\) for the domain of a
mapping \(m\). If \(a \notin \dom(m)\), then \(m[a \mapsto b]\) means to
extend \(m\) with a new link from \(a\) to \(b\). We also use
\(m\backslash a\) to mean the mapping obtained by removing \(a\) from
\(\dom(m)\), and \(m[a\colonequals b]\) to mean
\((m\backslash a)[a \mapsto b]\). Substitution \(\theta\) is a mapping
from variables to dynamic values. We write \(e[\theta]\) for the result
of applying \(\theta\) to \(e\). Pool \(\Pi\) is a mapping from thread
identifiers \(t\) (represented as natural numbers) to closed dynamic
expressions such that \(0 \in \dom(\Pi)\). We use
\(\Pi(t), t \in \dom(\Pi)\) to refer to a thread in \(\Pi\) whose thread
identifier is \(t\). We use \(\Pi(0)\) for the main thread.

Typing contexts are divided into a non-linear part \(\Gamma\) and a
linear part \(\Delta\). They are intuitionistic meaning that it is
required that each variable occurs at most once in a non-linear context
\(\Gamma\) or a linear context \(\Delta\). Given \(\Gamma_1, \Gamma_2\)
s.t. \(\dom(\Gamma_1)\cap\dom(\Gamma_2)=\varnothing\), we write
\((\Gamma_1, \Gamma_2)\) for the union of the two. The same notion also
applies to linear context \(\Delta\). Given non-linear context
\(\Gamma\) and linear context \(\Delta\), we can form a combined context
\((\Gamma;\Delta)\) when \(\dom(\Gamma)\cap\dom(\Delta)=\varnothing\).
Given \((\Gamma;\Delta)\), we may write \((\Gamma;\Delta),x:\hat\tau\)
for either \((\Gamma;\Delta,x:\hat\tau)\) or
\((\Gamma,x:\hat\tau;\Delta)\) if \(\hat\tau\) is indeed a non-linear
type.

Besides integers and booleans, we also assume a constant function
\texttt{thread\_create} in \emph{dcx} whose c-type in \(\sig\) is
\((\type{1}\multimap\type{1})\Rightarrow\type{1}\). A function of type
\(\type{1}\multimap\type{1}\) takes no argument and returns no result
(if it terminates). Since it is a true linear function, it can be
invoked exactly once. Intuitively, \texttt{thread\_create} creates a
thread that evaluates the linear function. Its semantic is to be
formally introduced later.

To manage resources, we follow \cite{Xi:2016ue} and define
\(\rho(\cdot)\) (\cref{fig:rho}) to compute the multiset (bag) of
constant resources in a given expression and \(\res\) (\textbf{RES} in
\cite{Xi:2016ue}) to range over such multisets of resources. We say
\(R\) is valid if \(R\in\res\) holds. Intuitively, \(\res\) can be
thought as all the resources of all the programs and \(R\) the resources
of a single program. We need to make sure that resource allocation to
different programs is consistent in \(\res\). For precise definitions,
please refer to our prior work.

\subsection{Sementics}\label{sementics}

Typing rules are the same as \cite{Xi:2016ue}, and we push it to
\cref{fig:mtlc0typing} in the appendix. The c-type judgment based on the
signature is of the form \(\sig\vDash e:\hat\tau\). A typing judgment is
of the form \(\Gamma;\Delta\vdash e:\hat\tau\) which is standard. By
inspecting the rules in \cref{fig:mtlc0typing}, we can readily see that
a closed value cannot contain resources if it can be assigned a
non-linear type \(\tau\). The \emph{Lemma of Canonical Forms} and the
\emph{Lemma of Substitution} are the same as our previous work
(\cite{Xi:2016ue} Lemma 2.2 and Lemma 2.3), we thus omit them
completely.

\(\Lzero\) has a call-by-value semantic, and the definition of
evaluation context (\(E\)), redex, and reducts are completely standard
and are the same as our previous work. We thus omit the details and
present just reduction on pools and properties of \(\Lzero\). Given
pools \(\Pi_1,\Pi_2\), we define \emph{reductions on pools}
\(\Pi_1\rightarrow\Pi_2\) as follows,
\begin{gather*}
\prftree[r]{\bf pr0}
  {e_1\rightarrow e_2}
  {\Pi[t\mapsto e_1]\rightarrow\Pi[t\mapsto e_2]}
\quad
\prftree[r]{\bf pr2}
  {t > 0}
  {\Pi[t\mapsto\langle\rangle]\rightarrow\Pi}
\\
\prftree[r]{\bf pr1}
  {\Pi(t)=E[\texttt{thread\_create}(\dyn{lam}~x.e)]}
  {\Pi\rightarrow\Pi[t\colonequals E[\langle\rangle]][t'\mapsto\dyn{app}(\dyn{lam}~x.e, \langle\rangle)]}
\end{gather*}

\begin{theorem}[Subject Reduction on Pools]
\label{thm:subjectreduction}
Assume $\varnothing;\varnothing\vdash\Pi_1:\hat\tau$ is derivable and $\Pi_1\rightarrow\Pi_2$ holds for some $\Pi_2$ satisfying $\rho(\Pi_2)\in\res$. Then $\varnothing;\varnothing\vdash\Pi_2:\hat\tau$ is also derivable.
\end{theorem}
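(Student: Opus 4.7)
The plan is to proceed by a case analysis on the reduction rule applied to derive $\Pi_1 \rightarrow \Pi_2$, namely one of \textbf{pr0}, \textbf{pr1}, or \textbf{pr2}. The proof largely follows the pattern of the analogous preservation result in the authors' prior work \cite{Xi:2016ue}, so I would first invert the typing judgment $\varnothing;\varnothing \vdash \Pi_1 : \hat\tau$ to expose how each thread $\Pi_1(t)$ is typed (with the main thread $\Pi_1(0)$ receiving type $\hat\tau$ and auxiliary threads typed at $\type{1}$), and also to expose the splitting of any shared linear resources across threads. In each case I would then reassemble a typing derivation for $\Pi_2$, using the resource-validity hypothesis $\rho(\Pi_2)\in\res$ to rule out ill-formed splits.

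For \textbf{pr0}, the work reduces to the single-thread subject reduction property for $\Lzero$, which I would obtain by the standard evaluation-context decomposition: if $e_1 = E[r]$ with $r$ a redex and $e_1 \rightarrow e_2 = E[r']$, then I would invoke a Replacement Lemma together with the Lemma of Canonical Forms and the Lemma of Substitution (both inherited from \cite{Xi:2016ue}) to retype $e_2$ at the same vtype as $e_1$. Plugging this back into the pool typing rule yields $\varnothing;\varnothing \vdash \Pi_1[t \mapsto e_2] : \hat\tau$.

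For \textbf{pr1}, the interesting step, I would observe that the subterm $\texttt{thread\_create}(\dyn{lam}\,x.e)$ sitting in the hole of $E$ must have been typed with $\type{1}$ using the c-type $(\type{1}\multimap\type{1})\Rightarrow\type{1}$ assigned in $\sig$. Inversion therefore gives a linear split $\Delta = \Delta_1, \Delta_2$ in which $\Delta_2$ types the closed-over linear resources of $\dyn{lam}\,x.e$ at vtype $\type{1}\multimap\type{1}$ and $\Delta_1$ is sufficient to type $E[\langle\rangle]$ at $\hat\tau$. I can then retype the updated pool by handing $\Delta_1$ to thread $t$ (now running $E[\langle\rangle]$) and $\Delta_2$ to the new thread $t'$ (running $\dyn{app}(\dyn{lam}\,x.e,\langle\rangle)$ at type $\type{1}$). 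The freshness of $t'$ follows from the side condition on pool extension, and the hypothesis $\rho(\Pi_2)\in\res$ guarantees that this resource split is admissible in $\res$ (i.e.\ no constant resource has been duplicated across the two threads). The case \textbf{pr2} is the easiest: the removed thread has type $\type{1}$, contains no linear resources (by Canonical Forms for $\langle\rangle$), and is not the main thread, so deleting it preserves both the typing of the remaining threads and the typing of the pool as a whole.

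The main obstacle is the linear-resource bookkeeping in the \textbf{pr1} case, since that is the only rule that changes the set of threads while transferring a closure containing potentially linear captured variables. Making precise the claim that the inverted split $\Delta = \Delta_1,\Delta_2$ composes with the ambient pool-level linear split — and doing so in a way compatible with $\rho(\Pi_2)\in\res$ rather than merely $\rho(\Pi_1)\in\res$ — is the delicate bit; everything else is routine appeal to the single-thread machinery of $\Lzero$.
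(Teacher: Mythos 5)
Your overall plan is correct and coincides with the proof the paper relies on: the paper itself gives no inline proof of \cref{thm:subjectreduction}, deferring to \cite{Xi:2016ue}, where subject reduction on pools is established exactly by your case analysis on \textbf{pr0}/\textbf{pr1}/\textbf{pr2}, using the Lemmas of Canonical Forms and Substitution together with evaluation-context decomposition and inversion on \textbf{ty-pool}. One correction, though: the ``delicate bit'' you single out in the \textbf{pr1} case is a non-issue in this formulation. The rule \textbf{ty-pool} types \emph{every} thread under empty contexts $\varnothing;\varnothing$, so there is no ambient pool-level split of linear contexts to compose with; and since evaluation contexts do not reach under binders, the redex $\texttt{thread\_create}(\dyn{lam}~x.e)$ and the abstraction $\dyn{lam}~x.e$ are themselves closed, forcing your $\Delta_1$ and $\Delta_2$ to both be $\varnothing$. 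Captured linear resources $\dcr$ are not witnessed by entries in $\Delta$ at all (rule \textbf{ty-res} types them with an \emph{empty} linear context); linearity across threads is instead tracked by the multiset function $\rho(\cdot)$ and the validity condition $\res$. Accordingly, in \textbf{pr1} one simply observes $\rho(\Pi_2)=\rho(\Pi_1)$, since the resources of $e$ migrate intact from thread $t$ to the fresh thread $t'$ and the disjoint union $\uplus$ is preserved; the hypothesis $\rho(\Pi_2)\in\res$ does its real work in the \textbf{pr0} case, where a reduction step involving a constant function \dcf\ may allocate or consume constant resources. With that bookkeeping relocated from $\Delta$-splitting to $\rho$, your proof goes through as the paper (via its prior work) intends.
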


\begin{theorem}[Progress Property on Pools]
\label{thm:progress}
Assume that $\varnothing;\varnothing\vdash\Pi_1:\hat\tau$ is derivable. Then we have 
\begin{itemize}
\item $\Pi_1$ is a singleton mapping $[0\mapsto v]$ for some value $v$, or 
\item $\Pi_1\rightarrow\Pi_2$ holds for some $\Pi_2$ s.t. $\rho(\Pi_2)\in\res$.
\end{itemize}
\end{theorem}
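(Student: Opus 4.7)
The plan is to reduce the pool-level progress to the standard single-thread progress property of $\Lzero$ (which is the analog of Lemma/Theorem stated in the prior work cited as \cite{Xi:2016ue}), and then case-analyze on the main thread and on any non-main thread that happens to be a value. Because pools are typed componentwise (each $\Pi(t)$ must be typeable under $\varnothing;\varnothing$ and the global resource bag $\rho(\Pi_1)$ must lie in $\res$), every thread in the pool is itself a closed, well-typed expression to which I can apply the usual canonical-forms/progress reasoning.

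First I would examine the main thread $e_0 = \Pi_1(0)$, whose type is $\hat\tau$. By single-thread progress, either $e_0$ is a value or $e_0 = E[r]$ for some evaluation context $E$ and some redex $r$. In the latter case, I split on whether $r$ is a standard redex or the distinguished form $\texttt{thread\_create}(\dyn{lam}~x.e)$. If $r$ is standard, single-thread reduction gives $e_0 \to e_0'$ and rule $\textbf{pr0}$ yields $\Pi_1 \to \Pi_1[0 \colonequals e_0']$; if $r$ is a $\texttt{thread\_create}$ call, rule $\textbf{pr1}$ applies, spawning a new thread at a fresh identifier $t'$ whose body is $\dyn{app}(\dyn{lam}~x.e,\langle\rangle)$. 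In both subcases the resulting pool's resource multiset is either unchanged ($\textbf{pr1}$ only relocates a $\dyn{lam}$ and inserts a $\langle\rangle$, neither of which are constant resources) or changes in exactly the way a single-thread reduction does, so $\rho(\Pi_2)\in\res$ follows from the corresponding fact at the single-thread level.

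If instead $e_0$ is a value $v$, I split on whether $\dom(\Pi_1) = \{0\}$. If so, $\Pi_1 = [0 \mapsto v]$ and the first disjunct of the conclusion holds. Otherwise, pick any $t \in \dom(\Pi_1)$ with $t > 0$ and consider $e_t = \Pi_1(t)$. Since every auxiliary thread originates from a $\texttt{thread\_create}$ whose argument has c-type $\type{1}\multimap\type{1}$, the expression $e_t$ has type $\type{1}$; by canonical forms any value of type $\type{1}$ is $\langle\rangle$. Applying single-thread progress again, either $e_t = \langle\rangle$, in which case $\textbf{pr2}$ fires and retires the thread (clearly preserving $\rho\in\res$ because $\langle\rangle$ contributes no resources), or $e_t$ steps and we proceed exactly as in the main-thread case via $\textbf{pr0}$ or $\textbf{pr1}$.

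The main obstacle I anticipate is the resource side-condition $\rho(\Pi_2)\in\res$: single-thread progress gives us reducibility but the pool-level invariant must be tracked through each rule. The delicate case is $\textbf{pr1}$, where a closure is duplicated in the sense that a fresh thread is spawned; here I need to argue that since $\dyn{lam}~x.e$ is a value of type $\type{1}\multimap\type{1}$, it cannot itself contain free resources that would be duplicated (resources in closed values of non-linear shape are ruled out, and the linear closure is simply moved, not copied, from the evaluation context into the new thread). Once this bookkeeping is discharged, validity of $\rho(\Pi_2)$ is immediate from validity of $\rho(\Pi_1)$ and from the corresponding single-thread resource-preservation statement inherited from \cite{Xi:2016ue}.
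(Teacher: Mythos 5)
Your proposal is correct and follows essentially the same route as the paper, which in fact omits the proof entirely and defers to the prior work \cite{Xi:2016ue}: there the argument is exactly this standard one, namely typing of pools componentwise, single-thread progress with canonical forms, a case split on whether the main thread is a value and whether the pool is a singleton, and a case split on standard redexes versus \texttt{thread\_create} to select among \textbf{pr0}, \textbf{pr1}, and \textbf{pr2}, with the observation that \textbf{pr1} and \textbf{pr2} leave $\rho(\cdot)$ unchanged. Your bookkeeping of the resource side-condition (the closure being moved rather than copied under \textbf{pr1}, and validity for \textbf{pr0} inherited from the single-thread resource assumptions on the signature) matches the treatment in the cited work.
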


\begin{theorem}[Soundness of $\Lzero$]
\label{thm:soundness}
Assume that $\varnothing;\varnothing\vdash\Pi_1:\hat\tau$ is derivable. Then for any $\Pi_2$, $\Pi_1\rightarrow^* \Pi_2$ implies that either $\Pi_2$ is a singleton mapping $[0\mapsto v]$ for some value $v$ or $\Pi_2\rightarrow\Pi_3$ for some $\Pi_3$ satisfying $\rho(\Pi_3)\in\res$, where $\rightarrow^*$ is the transitive and reflective closure of $\rightarrow$.
\end{theorem}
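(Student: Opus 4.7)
The plan is to argue by induction on the number $n$ of steps in the reduction sequence $\Pi_1 \rightarrow^* \Pi_2$, combining Theorem~\ref{thm:subjectreduction} and Theorem~\ref{thm:progress} in the standard preservation-plus-progress fashion.

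For the base case $n = 0$, we have $\Pi_2 = \Pi_1$, so the typing hypothesis $\varnothing;\varnothing\vdash\Pi_1:\hat\tau$ lets us apply the Progress Property directly to obtain exactly the two disjuncts in the conclusion. For the inductive step, I would decompose $\Pi_1 \rightarrow^* \Pi_2$ as $\Pi_1 \rightarrow \Pi_1' \rightarrow^{n-1} \Pi_2$. From the first step together with the typing of $\Pi_1$, Subject Reduction yields $\varnothing;\varnothing\vdash\Pi_1':\hat\tau$ provided $\rho(\Pi_1')\in\res$. Once this typing is in hand, the induction hypothesis applied to the residual sequence of length $n-1$ delivers the desired dichotomy about $\Pi_2$.

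The main obstacle lies in the technical gap between the two theorems: the Progress Property only guarantees the existence of \emph{some} reduct $\Pi_3$ of $\Pi_1$ with $\rho(\Pi_3)\in\res$, whereas the inductive argument hands us a particular $\Pi_1'$ drawn from the given sequence. To close this gap I would prove an auxiliary observation that $\rho$-validity is preserved by every one-step reduction of a well-typed pool, by inspecting the three pool-reduction rules: \textbf{pr0} lifts a call-by-value step whose redex/reduct relation is well understood from the $\Lzero$ development and creates no new resources, \textbf{pr1} merely relocates a $\type{1}\multimap\type{1}$-typed thunk into a freshly spawned thread (whose identifier $t'$ must be chosen outside $\dom(\Pi)$), and \textbf{pr2} discards a completed thread whose value $\langle\rangle$ carries no resources. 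Since reduction thus acts as a permutation/transfer on the underlying resource multiset, $\rho(\Pi_1')$ coincides with $\rho$ of any other reduct that Progress might have selected, and the validity supplied by Progress transfers to our specific $\Pi_1'$. With that lemma established, the induction goes through mechanically.
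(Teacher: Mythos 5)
Your skeleton — induction on the length of $\Pi_1\rightarrow^*\Pi_2$, progress at the base, subject reduction feeding the induction hypothesis at each step — is exactly what the paper intends by its one-line proof (``follows directly from \cref{thm:subjectreduction} and \cref{thm:progress}''), so at that level you and the paper coincide. The problem is the auxiliary lemma you introduce to discharge the side condition $\rho(\Pi_1')\in\res$, and specifically its justification.

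You claim that every one-step reduct of a well-typed pool is automatically valid because reduction ``acts as a permutation/transfer on the underlying resource multiset,'' and moreover that $\rho(\Pi_1')$ coincides with $\rho$ of whichever reduct \cref{thm:progress} happens to produce. Neither claim holds. Rule \textbf{pr1} and \textbf{pr2} do preserve the pool's resource multiset, but \textbf{pr0} lifts arbitrary expression-level steps, and in $\Lzero$ these include the ad hoc redex/reduct pairs for constant functions $\dcf$, which are deliberately left abstract (inherited from \cite{Xi:2016ue}) precisely so that they may create or consume resources — channel endpoints in $\Lchan$ being the intended instantiation. If your lemma were provable by inspecting \textbf{pr0}--\textbf{pr2} alone, the hypothesis $\rho(\Pi_2)\in\res$ in \cref{thm:subjectreduction} and the qualifier ``for some $\Pi_2$ s.t.\ $\rho(\Pi_2)\in\res$'' in \cref{thm:progress} would both be vacuous; their presence signals that resource multisets \emph{do} change under reduction and that only some of the resulting multisets are consistent. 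The coincidence claim fails independently: a pool typically offers redexes in several threads, so progress may select a reduct arising from, say, a resource-consuming $\dcf$ step in thread $t_2$ while your given sequence steps a pure redex in thread $t_1$; the two reducts then have different resource multisets, and validity of one transfers nothing to the other.

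What actually closes the gap is not a stronger invariance lemma but the intended reading of the statement, inherited from the prior work: validity of the resources along the sequence is a standing proviso threaded through the induction (the $\Lchan$ version of soundness makes this partially explicit by restricting to $\Pi_2$ with $\rho(\Pi_2)\in\res$, matching the hypothesis of \cref{thm:subjectreduction}). Under that reading, each step of the given sequence comes equipped with $\rho(\Pi_1')\in\res$, subject reduction applies directly, and your induction becomes exactly the paper's proof — no multiset-invariance lemma needed, and none available.
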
\begin{proof}
Follows directly from \cref{thm:subjectreduction} and \cref{thm:progress}.
\end{proof}

\section{Predicatization}\label{sec:mtlc-linear-dep}

In this section, we extremely briefly describe an approach to extend
\(\Lzero\) to support both universally and existentially quantified
types. Such process is \emph{predicatization} and is mostly standard in
the framework of \ATS~\cite{Xi:2003kl}. Predicatization is extensively
described in \cite{Xi:1998wa,Xi:1999bh,Xi:2007te}, and has been employed
in several other papers based on \ATS, e.g.
\cite{Shi:2013hr,Shi:2008vp}. We thus only summarize the process to
prepare for the development of \(\Lchan\), and omit any technical
details.

As an applied type system, \(\Ldep\) is layered into \emph{statics} and
\emph{dynamics}. The dynamics of \(\Ldep\) is based on \(\Lzero\), while
the statics will be a newly introduced layer underlying \(\Lzero\). The
predicatization process concerns mostly about formalizing the type index
language while maintaining the dynamic semantics of \(\Lzero\), and
reducing type equality problems into constraint solving problems w.r.t.
some constraint domain, such as integer arithmetic. General steps of
predicatization involve the followings:
\begin{itemize}
\tightlist
\item
  Formalizing statics, the language of type index. This involves its
  syntax, sorting rules, and specifically, non-linear type/linear type
  formation rules, etc.
\item
  Formalizing type equality in terms of subtyping relations and regular
  constraint relations.
\item
  Extending dynamics. This involves extending the syntax, typings,
  evaluation context, and reduction relations to accommodate, for
  instance, the introduction and elimination of quantifiers.
\end{itemize}

\begin{figure}
\caption{Syntax of Statics}
\label{fig:statics}
\[
\begin{array}{rrcl}
\text{base sorts}           \quad & b      & \defeq & \sort{int} \mid \sort{bool} \mid \sort{type} \mid \sort{vtype} \\ 
\text{sorts}                \quad & \sigma & \defeq & b \mid \sigma_1 \rightarrow \sigma_2                                                \\
\text{static constants}     \quad & \scx   & \defeq & \scc \mid \scf                                                      \\
\text{static terms}         \quad & s      & \defeq & a \mid \scx\,(\vv{s}) \mid \lambda a {:} \sigma . s \mid s_1 (s_2)     \\
\text{static context}       \quad & \Sigma & \defeq & \varnothing \mid \Sigma, a : \sigma                                          \\
\text{static signature}     \quad & \sig   & \defeq & \varnothing \mid \sig, \scx : (\vv{\sigma}) \Rightarrow \sigma \\
\text{static substitutions} \quad & \theta & \defeq & [] \mid \theta[a \mapsto s]
\end{array}
\]
\end{figure} 
\begin{figure}
\caption{Types}
\label{fig:types}
\[
\begin{array}{rrcl}
\text{types}   \quad & \tau     & \defeq & a \mid \delta(\vv{s}) \mid \type{1} \mid \tau_1 \times \tau_2 \mid \hat\tau_1 \rightarrow \hat\tau_2 \mid P\guard\tau \mid P\assert\tau \mid \forall a {:} \sigma. \tau \mid \exists a {:} \sigma.\tau            \\
\text{vtypes}  \quad & \hat\tau & \defeq & a\mid\hat\delta(\vv{s}) \mid \tau \mid \hat\tau_1\otimes\hat\tau_2 \mid \hat\tau_1\multimap\hat\tau_2 \mid   P\guard\hat\tau \mid P\assert\hat\tau \mid \forall a{:}\sigma.\hat\tau \mid \exists a{:}\sigma.\hat\tau \\
\end{array}
\]
\end{figure} \begin{figure}[!h]
\caption{Extended Dynamic Language Syntax}
\label{fig:mtlcdepsyntax}
\[
\begin{array}{rrcl}
\text{dynamic terms}  \quad & e & \defeq & \dotsb \mid \guardi(v) \mid \guarde(e) \mid  \assert(e) \mid \dyn{let}~{\assert}(x)=e_1~\dyn{in}~e_2 \mid\\
               						   &&& \foralli(v) \mid \foralle(e)  \mid \exists(e) \mid \dyn{let}~\exists(x)=e_1~\dyn{in}~e_2 \\
\text{dynamic values} \quad & v & \defeq & \dotsb \mid \guardi(v) \mid \foralli(v) \mid \assert(v) \mid \exists(v)   \\
\end{array}
\]
\end{figure}

\begin{figure}[!h]
\caption{Some Additional Typing Rules of $\Ldep$}
\label{fig:mtlcdeptyping}

\begin{gather*}
\prftree[r]{\bf ty-$\forall$-intr}
	{\Sigma,a:\sigma;\vv P;\Gamma;\Delta\vdash v:\hat\tau}
	{\Sigma;\vv P;\Gamma;\Delta\vdash\foralli(v):\forall a{:}\sigma.\hat\tau}
\qquad
\prftree[r]{\bf ty-$\forall$-elim}
	{\stackbox{
		$\Sigma\vdash s:\sigma$\\
		$\Sigma;\vv P;\Gamma;\Delta\vdash e:\forall a{:}\sigma.\hat\tau$
	}}
	{\Sigma;\vv P;\Gamma;\Delta\vdash\foralle(e):\hat\tau[a\mapsto s]}
\\
\prftree[r]{\bf ty-$\exists$-intr}
	{\stackbox{
		$\Sigma\vdash s:\sigma$\\
		$\Sigma;\vv P;\Gamma;\Delta\vdash e:\hat\tau[a\mapsto s]$
	}}
	{\Sigma;\vv P;\Gamma;\Delta\vdash\exists(e):\exists a{:}\sigma.\hat\tau}
\quad
\prftree[r]{\bf ty-$\exists$-elim}
	{\stackbox{
		$\Sigma;\vv P;\Gamma;\Delta\vdash e_1:\exists a{:}\sigma.\hat\tau_1$\\
		$\Sigma,a:\sigma;\vv P;(\Gamma;\Delta),x:\hat\tau_1\vdash e_2:\hat\tau_2$
	}}
	{\Sigma;\vv P;\Gamma;\Delta\vdash\dyn{let}~\exists(x)=e_1~\dyn{in}~e_2 : \hat\tau_2}
\end{gather*}
\end{figure}

The language of statics can be regarded as a simply typed
\(\lambda\)-calculus. The ``types'' for static terms are denoted as
\emph{sorts} to avoid confusion. The syntax for statics is shown in
\cref{fig:statics} which is mostly standard. We assume base sorts \(b\)
to include \(\sort{int}\), \(\sort{bool}\), \(\sort{type}\) for types,
and \(\sort{vtype}\) for linear types. Non-linear/linear types in the
\(\Ldep\) are now static terms of sorts
\(\sort{type}\)/\(\sort{vtype}\), respectively. We reformulate types in
the dynamics in \cref{fig:types}.

Given a proposition \(P\) (a static term of sort \(\sort{bool}\)) and a
type \(\tau\), \(P\guard\tau\) is a \emph{guarded type}, and
\(P\assert\tau\) is an \emph{asserting type}. Formal definition of
guarded types and asserting types can be found in \cite{Xi:2007te}.
Intuitively, in order to turn a value of type \(P\guard\tau\) into a
value of type \(\tau\), we must establish the proposition \(P\), thus
``guarded''; if a value of type \(P\assert\tau\) is generated, we can
assume that the proposition \(P\) holds, thus ``asserting''.

The extended syntax of \(\Ldep\) over that of \(\Lzero\) is given in
\cref{fig:mtlcdepsyntax}. Typing judgement in \(\Ldep\) is of the form
\(\Sigma;\vv{P};\Gamma;\Delta\vdash e:\hat\tau\) where \(\Sigma\) is the
sorting environment for static terms and \(\vv{P}\) is a sequence of
propositions keeping track of the constraints. We present only some
additional typing rules in \cref{fig:mtlcdeptyping}.

We claim that \cref{thm:subjectreduction}, \cref{thm:progress}, and
\cref{thm:soundness} can be carrier over to \(\Ldep\) following the
proof in \cite{Xi:2007te}.

\section{Dependent Session Types}\label{sec:session-types}

Dependent types are types that depend on terms, and they offer much more
expressive power for specifying intended behavior of a program through
types. A restricted form of dependent types, we call dependent types of
DML-style \cite{Xi:2007te}, are types that depend on \emph{static}
terms. In this section, we will formally develop \emph{dependent session
types} (of DML-style), where session types can have quantification over
static terms. Based on \(\Ldep\), we first extend the statics, then
extend the dynamics, and finally discuss the soundness of \(\Lchan\).

\subsection{Extending Statics}\label{extending-statics}

The syntax of extended statics is given in \cref{fig:sessionsyntax}. We
add \(\sort{stype}\) as a new base sort to represnet session types.
Session types \(\pi\) are now static terms of sort \(\sort{stype}\). We
use \(i\) for static integers and \(b\) for static booleans. \(\nil(i)\)
means party \(i\) (the party holding endpoint \(i\)) will close the
session while the other party will wait for closing. Given linear type
\(\hat\tau\) and a session type \(\pi\),
\(\msg(i,\hat\tau)\coloncolon\pi\) means party \(i\) should send a
message to the other party, and then continue as \(\pi\).
\(\branch(i,\pi_1,\pi_2)\) is for branching, where party \(i\) should
choose to continue as \(\pi_1\) or \(\pi_2\) while the other party
simply follows the choice. Beyond these basic session type constructs,
we have \(\ite\)\footnote{Note that \(\branch\) is just a special case
  of \(\ite\) and we can indeed encode \(\branch\) using \(\ite\).} for
conditional branch, \(\quan\) for universal/existential quantification,
and \(\fix\) for recursions. Given a static boolean expression,
\(\ite(b, \pi_1,\pi_2)\) represents \(\pi_1\) when \(b\) is \(\top\)
(true), or \(\pi_2\) when \(b\) is \(\bot\) (false). Given a static
function of sort \(\sigma\rightarrow\sort{stype}\),
\(\quan(i,\lambda a{:}\sigma.\pi)\) is interpreted
\emph{intuitively}\footnote{This is only intuitively interpreted. Its
  accurate interpretation should be considered together with an endpoint
  since \(\pi\) is global. See later sections.} as universally
quantified \(\forall a{:}\sigma.\pi\) by party \(i\), or as
existentially quantified \(\exists a{:}\sigma.\pi\) by the other party.
Note that this is actually a session type \emph{scheme} and we assume
the existance of such \(\quan\) for every sort \(\sigma\). The need for
a unified representation of quantifiers, \texttt{quan}, is a must since
we essentially formulate all session types as \emph{global}, as compared
to polarized presentation where session types are all \emph{local}.
Given a static function of sort \(\sort{stype}\rightarrow\sort{stype}\),
\(\fix(\lambda a{:}\sort{stype}.\pi)\) is an encoding of the fixpoint
operator that represents the fixpoint of the input function. In
practice, we may write recursive definitions directly as a syntax sugar
(as shown in \cref{ex:array}).

\begin{figure}[!h]
\caption{Syntax of Dependent Session Types}
\label{fig:sessionsyntax}
\[
\begin{array}{rrcl}
\text{base sorts} & b   & \defeq & \dotsb \mid \sort{stype} \\
\text{stypes}     & \pi & \defeq & \stype{end}(i) \mid \stype{msg}(i, \hat\tau) \coloncolon \pi \mid \stype{branch}(i, \pi_1, \pi_2) \mid \\
                  &     &        & \stype{ite}(b,\pi_1,\pi_2) \mid \stype{quan}(i,\lambda a{:}\sigma.\pi) \mid \stype{fix}(\lambda a{:}\sort{stype}.\pi) 
\end{array}
\]
\end{figure} 

Besides, we also introduce \emph{role} as a subset sort
\(\{r{:}\sort{int}\mid r=0\lor r=1\}\) to represent two parties, server
(0) and client (1), involved in a binary session. Note that subset sorts
are merely syntax sugars for a guarded/asserting type \cite{Xi:1999bh}.
For instance, \(\forall r{:}\sort{role}.\type{int}(r)\) is desugared
into \(\forall r{:}\sort{int}.(r=0\lor r=1)\guard\type{int}(r)\). We
also add the following \emph{linear} type constructor as a static
constant\footnote{It is indeed
  \(\chan:(\sort{int},\sort{stype})\Rightarrow\sort{vtype}\) since in
  \ATS, subset sort is not allowed in a c-sort. We use \(\sort{role}\)
  here just to simplify our presentation.},
\[\chan : (\sort{role},\sort{stype})\Rightarrow\sort{vtype}\] that
represents a linear channel. Given role \(r\) and session type \(\pi\),
\(\chan(r,\pi)\) is endpoint \(r\) of a channel held by a party. The
channel is governed by the session type \(\pi\), and the endpoint
interprets this session type \emph{locally} as role \(r\).

\subsection{Extending Dynamics}\label{extending-dynamics}

We add the following dynamic constant functions (pre-defined functions),
shown in \cref{fig:sessionapi}, to create, use, and consume linear
channels. We will refer to them as \emph{session API} or just the API.
We break up the figure and present them with explanations here. \[
\texttt{create}  : \forall r_1,r_2{:}\sort{role}.\forall\pi{:}\sort{stype}.(r_1\neq r_2)\guard(\type{chan}(r_2,\pi)\multimap\type{1})\Rightarrow\type{chan}(r_1,\pi) 
\] \texttt{create} is to create a session of two threads, connected via
a channel of session type \(\pi\), and each thread holds an endpoint of
the channel. One party is holding endpoint \(r_1\) of type
\(\chan(r_1,\pi)\) as returned by \texttt{create} in the current thread,
while the other party is holding endpoint \(r_2(\neq r_1)\) of type
\(\chan(r_2,\pi)\) in a newly spawned thread evaluating the given
\emph{linear} function of type \(\chan(r_2,\pi)\multimap\type{1}\). As
the (closure) function may contains resources, it must be linear to
guarantee that it can be called \emph{exactly once}. The channel
endpoint will be consumed in this function as it is linear.
\begin{align*}
\texttt{send}    &: \forall r,r_0{:}\sort{role}.\forall\pi{:}\sort{stype}.\forall \hat\tau{:}\sort{vtype}.(r=r_0)\guard(\chan(r,\msg(r_0,\hat\tau)\coloncolon\pi), \hat\tau)\Rightarrow\chan(r,\pi) 
\\
\texttt{recv}    &: \forall r,r_0{:}\sort{role}.\forall\pi{:}\sort{stype}.\forall \hat\tau{:}\sort{vtype}.(r\neq r_0)\guard\chan(r,\msg(r_0,\hat\tau)\coloncolon\pi)\Rightarrow \hat\tau\otimes\chan(r,\pi) 
\end{align*}
\texttt{send} is for sending linear values. Given global session type
\(\msg(r_0,\hat\tau)\coloncolon\pi\), its interpretation at \(r\) where
\(r=r_0\) is to send a message of linear type \(\hat\tau\) then to
proceed as \(\pi\). The \texttt{send} function \emph{consumes} the
channel, uses the capability of sending denoted by
\(\msg(r_0,\hat\tau)\), and returns another channel of type
\(\chan(r,\pi)\), where the sending capability is now removed. Dually,
the interpretation of \(\msg(r_0,\hat\tau)\coloncolon\pi\) is to receive
at party \(r(\neq r_0)\), implemented by \texttt{recv}. Note that even
though we encode it here in the style of continuation, our
implementation directly \emph{changes} the type of channel without
consuming it. In ATS programming language, it is presented in the
following style,
\begin{multline*}
\texttt{send}:\forall r,r_0{:}\sort{role}.\forall\pi{:}\sort{stype}.\forall \hat\tau{:}\sort{vtype}.\\
(r=r_0)\guard(!\chan(r,\msg(r_0,\hat\tau)\coloncolon\pi)\gg\chan(r,\pi), \hat\tau)\Rightarrow\type{1}
\end{multline*}
Similarly, \texttt{close} is for terminating a session while
\texttt{wait} is waiting for the other side to close.
\begin{align*}
\texttt{close}   &: \forall r,r_0{:}\sort{role}.(r=r_0)\guard\chan(r,\stype{end}(r_0))\Rightarrow\type{1} 
\\
\texttt{wait}    &: \forall r,r_0{:}\sort{role}.(r\neq r_0)\guard\chan(r,\stype{end}(r_0))\Rightarrow\type{1} 
\end{align*}
The interpretation of \(\branch(r_0,\pi_1,\pi_2)\) at party
\(r(\neq r_0)\) is to offer two choices, \(\pi_1\) and \(\pi_2\).
Therefore, \texttt{offer} function will consume the endpoint and return
a linear pair of the other party's choice (as a singleton boolean) and
the endpoint whose session type is a conditional branch between
\(\pi_1,\pi_2\) using the received tag \(b\) as the condition. Dually,
\texttt{choose} will choose \(\pi_1\) and \(\pi_2\) respectively
according to the boolean tag provided by the user. Note that these two
functions are completely unnecessary since they can be encoded using
other functions/session types. We present them here just to stay inline
with others where \texttt{offer/choose} are usually treated as standard
constructs.
\begin{align*}
\texttt{offer}   &: \forall r,r_0{:}\sort{role}.\forall \pi_1,\pi_2{:}\sort{stype}.(r\neq r_0)\guard\chan(r,\stype{branch}(r_0,\pi_1,\pi_2)) \\
                               &\qquad\qquad\Rightarrow\exists b{:}\sort{bool}.\type{bool}(b)\otimes\chan(r,\ite(b,\pi_1,\pi_2))
\\
\texttt{choose} &: \forall r,r_0{:}\sort{role}.\forall \pi_1,\pi_2{:}\sort{stype}.\forall b{:}\sort{bool}.(r=r_0)\guard(\chan(r,\stype{branch}(r_0,\pi_1,\pi_2)), \type{bool}(b)) \\
                               &\qquad\qquad\Rightarrow \chan(r,\ite(b,\pi_1,\pi_2)) 
\end{align*}
\texttt{unify} is to interpret \(\quan(r_0,\cdot)\) at party
\(r(= r_0)\) as universal quantifier, while \texttt{exify} is to
interpret it dually as existential quantifier at party \(r(\neq r_0)\).
\begin{align*}
\texttt{unify}   &: \forall r,r_0{:}\sort{role}.\forall\pi{:}\sort{stype}.\forall f{:}\sigma\rightarrow\sort{stype}.
                              \\&\qquad\qquad (r= r_0)\guard\chan(r,\quan(r_0,f))\Rightarrow\forall s{:}\sigma.\chan(r,f(s)) \\
\texttt{exify}   &: \forall r,r_0{:}\sort{role}.\forall\pi{:}\sort{stype}.\forall f{:}\sigma\rightarrow\sort{stype}.
                              \\&\qquad\qquad (r\neq r_0)\guard\chan(r,\quan(r_0,f))\Rightarrow\exists s{:}\sigma.\chan(r,f(s)) 
\end{align*}
\texttt{itet} and \texttt{itef} reduces the conditional branching
session type \(\ite(b,\pi_1,\pi_2)\) according to static boolean
expression \(b\). \texttt{recurse} unrolls the fixpoint encoding.
\begin{align*}
\texttt{itet}    &: \forall r{:}\sort{role}.\forall \pi_1,\pi_2{:}\sort{stype}.\chan(r,\stype{ite}(\top,\pi_1,\pi_2))\Rightarrow\chan(r,\pi_1)\\
\texttt{itef}    &: \forall r{:}\sort{role}.\forall \pi_1,\pi_2{:}\sort{stype}.\chan(r,\stype{ite}(\bot,\pi_1,\pi_2))\Rightarrow\chan(r,\pi_2)\\
\texttt{recurse} &: \forall r{:}\sort{role}.\forall f{:}\sort{stype}\rightarrow\sort{stype}.\chan(r,\fix(f))\Rightarrow\chan(r,f(\fix(f))) 
\end{align*}
Note that these functions
(\texttt{unify}/\texttt{exify}/\texttt{itet}/\texttt{itef}/\texttt{recurse})
are \emph{proof} functions that merely change the types of endpoints.
They have no runtime counterparts and thus can be eliminated after type
checking has passed.

\emph{Duality} is not explicitly encoded as is usually done in session
types literature \cite{Lindley:2016du,Pucella:2008dt,Jespersen:2015ka}.
Instead, we choose to make the duality as general as possible and use a
\emph{global} session type \(\pi\) paired with a role \(r\) to guide the
local interpretation at endpoint \(r\). Given that \(r\) can only be
\(0\) or \(1\), we can define that \(\chan(0,\pi)\) and \(\chan(1,\pi)\)
are \emph{dual} endpoints of a channel. Session API come in dual pairs,
and the dual usage of dual endpoints are realized by the corresponding
session API pairs with the help of guarded types. The typing rules for
guarded types will force one endpoint to be only used with one API in
the pair while the dual endpoint to be only used with the dual API in
the same pair. A crucial indication of such formulation is that we
essentially reduce the duality checking problem into a simple integer
comparison problem, which greatly simplifies our formulation. Also, it
reduces the number of the dynamic constants in \cref{fig:sessionapi} in
half by avoiding coercion between so-called input/output types
\cite{Lindley:2016du}. In our previous work \cite{Xi:2016ue}, we used a
polarized presentation, e.g. \(\type{chanpos}(p)\) and
\(\type{channeg}(p)\) where \(p\) is a \emph{local} type. This is
similar to \texttt{In{[}{]}}/\texttt{Out{[}{]}} in \cite{Scalas:2016uh},
\(S_?\)/\(S_!\) in \cite{Lindley:2016du} Section 6, and
\emph{dual}/\emph{notDual} in \cite{Sackman:2008ux}. We found this
polarized presentation is not suitable for extending to multi-party
sessions, whereas our ``global+role+guard'' formulation can be very
easily adapted to multi-party sessions based on \cite{Xi:2017wv}. For
example, in a three-party session, we can define \(\chan(0,\pi)\),
\(\chan(1,\pi)\), and \(\chan(2,\pi)\) to be \emph{compatible}, as a
generalization to duality. We very briefly mention such extension in
\cref{sec:ext}. \[
\texttt{cut}     : \forall r_1,r_2{:}\sort{role}.\forall\pi{:}\sort{stype}.(r_1\neq r_2)\guard(\chan(r_1,\pi),\chan(r_2,\pi))\Rightarrow\type{1}
\] Given \emph{dual} endpoints, \texttt{cut} will link together the
endpoints by performing \emph{bi-directional forwarding}. In other
words, it will send onto one endpoint each received value from the other
endpoint. \texttt{cut} is often used to implement delegation of service.
It can be proven that these two endpoints must belong to
\emph{different} channels since otherwise, it will obviously deadlock.
We will explain more in \cref{sec:implementations}.

\subsection{Dynamic Semantics}\label{dynamic-semantics}

The dynamic semantics of \(\Lchan\) is indeed the same as our prior work
except that we have added a branching construct and we use a more
general unpolarized presentation. We thus push additional reduction ruls
on pools in \cref{fig:sessionreduction} and
\cref{fig:sessionreductioncut} to the appendix. Note that, as mentioned
above,
\texttt{unify}/\texttt{exify}/\texttt{itet}/\texttt{itef}/\texttt{recurse}
do not have any dynamic semantics. The meaning of these rules should be
intuitively clear. For instance, \textbf{pr-msg} states, if thread
\(t_1\) in pool \(\Pi\) is of the form
\(E[\texttt{send}(ch_{i,r_1},v)]\), and thread \(t_2\) in pool \(\Pi\)
is of the form \(E[\texttt{recv}(ch_{i,r_2})]\), then \(\Pi\) can be
reduced to another pool where \(t_1\) is replaced by \(E[ch_{i,r_1}]\)
and \(t_2\) is replaced by \(E[\langle v,ch_{i,r_2}\rangle]\).

\subsection{Soundness of the Type
System}\label{soundness-of-the-type-system}

While \cref{thm:subjectreduction} can be easily established for
\(\Lchan\), \cref{thm:progress} is more involved due to the addition of
session API. However, based on \cite{Xi:2003kl,Xi:1999bh}, \(\Ldep\) and
\(\Lchan\) are \emph{conservative} extensions of \(\Lzero\), and the
deadlock-freeness is proven for \(\Lzero\) with channels in
\cite{Xi:2016ue} using a technique known as \emph{DF-Reducibility}. Thus
the same results can be proven for \(\Lchan\) using the exact same
technique since the dynamic semantics are the same. We thus refer
readers to \cite{Xi:2016ue,Xi:1999bh} for detailed proofs. We can then
establish the same deadlock-freeness guarantee as stated in Lemma 3.1 of
\cite{Xi:2016ue}

\begin{theorem}[Subject Reduction of $\Lchan$]
Assume that $\varnothing;\varnothing;\varnothing;\varnothing\vdash\Pi_1:\hat\tau$ is derivable and $\Pi_1\rightarrow\Pi_2$ s.t. $\rho(\Pi_2)\in\res$. Then $\varnothing;\varnothing;\varnothing;\varnothing\vdash\Pi_2:\hat\tau$ is also derivable.
\end{theorem}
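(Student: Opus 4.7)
The plan is to proceed by induction on the derivation of the pool reduction $\Pi_1 \rightarrow \Pi_2$, following the same overall strategy as in our prior work \cite{Xi:2016ue} but extended to accommodate the dependent session types layer of $\Lchan$. Because $\Lchan$ is a conservative extension of $\Ldep$ (which in turn extends $\Lzero$), the cases for the base pool reduction rules \textbf{pr0}, \textbf{pr1}, \textbf{pr2} carry over essentially verbatim from the soundness proofs already established: \textbf{pr0} reduces to single-thread subject reduction for $\Ldep$, which itself combines the canonical forms lemma, the substitution lemma for static and dynamic substitutions, and inversion on the typing rules (including the new \textbf{ty-$\forall$-intr}, \textbf{ty-$\forall$-elim}, \textbf{ty-$\exists$-intr}, \textbf{ty-$\exists$-elim} rules in \cref{fig:mtlcdeptyping}); \textbf{pr1} requires the spawned thread to be given type $\type{1}$, which is immediate from the c-type of \texttt{thread\_create}; and \textbf{pr2} removes a terminated thread carrying no resources.

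The genuinely new work concerns the session-API reduction rules in \cref{fig:sessionreduction} and \cref{fig:sessionreductioncut}. For \textbf{pr-msg}, I would invert the typing derivations at the two participating threads: the sender occurrence $E[\texttt{send}(ch_{i,r_1},v)]$ forces, via the c-type of \texttt{send} together with the constraint $r_1 = r_0$, that the endpoint has type $\chan(r_1,\msg(r_0,\hat\tau)\coloncolon\pi)$ and that $v$ has type $\hat\tau$; dually the receiver at $E[\texttt{recv}(ch_{i,r_2})]$ has $\chan(r_2,\msg(r_0,\hat\tau)\coloncolon\pi)$ with $r_2\neq r_0$. Because both endpoints refer to the \emph{same} global session type $\pi$, substituting the reducts $E[ch_{i,r_1}]$ and $E[\langle v, ch_{i,r_2}\rangle]$ preserves well-typedness provided we track that the resource $ch_{i,\cdot}$ has evolved coherently at both endpoints. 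The cases for \texttt{offer}/\texttt{choose}, \texttt{close}/\texttt{wait}, and \texttt{cut} are analogous; for \texttt{cut} the key is that bi-directional forwarding preserves the invariant that dual endpoints carry the same $\pi$.

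For the quantifier-related constants \texttt{unify}, \texttt{exify}, \texttt{itet}, \texttt{itef}, \texttt{recurse} there is nothing to do at the pool-reduction level since these are proof functions with no dynamic semantics; their effect is confined to type derivations and is handled by the conservativity of the extension. A small auxiliary lemma stating that for any well-typed pool containing a channel endpoint $\chan(r,\pi)$ there exists a dual occurrence $\chan(1-r,\pi)$ in the pool (analogous to the resource invariant formalised via $\rho$ and $\res$) is used silently throughout the API cases to justify the simultaneous rewriting of both endpoints.

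The main obstacle, I expect, is handling the interaction between the guarded types and the linear-resource bookkeeping in the \textbf{pr-msg} and \textbf{pr-cut} cases: inversion on the typing of a \texttt{send}/\texttt{recv}/\texttt{cut} application has to extract the role-equality or role-inequality constraint from the guard, discharge it against the actual role indices stored in $ch_{i,\cdot}$, and simultaneously show that the updated pool still satisfies $\rho(\Pi_2)\in\res$ as assumed. This is exactly where our unpolarized ``global$+$role$+$guard'' formulation pays off, since the constraint reduces to a decidable integer equality, letting the $\Ldep$ constraint-solving machinery dispatch the obligation uniformly. With these pieces in place the induction closes and the theorem follows.
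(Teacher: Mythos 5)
Your proposal is correct and takes essentially the same route as the paper: the paper itself gives no detailed proof, merely asserting that subject reduction "can be easily established" for $\Lchan$ by appealing to conservativity over $\Lzero$/$\Ldep$ and deferring to the induction-on-reduction technique of \cite{Xi:2016ue,Xi:1999bh}, and your plan is precisely that technique spelled out (reused \textbf{pr0}/\textbf{pr1}/\textbf{pr2} cases, inversion with the session-API c-types for \textbf{pr-msg} and its kin, erasure of the proof functions, and the $\rho$/$\res$ dual-endpoint invariant). If anything, your sketch is more explicit than the paper's own argument, and the points you leave implicit (the duality invariant, the existential-witness coherence at erased \texttt{exify} sites) are exactly the points the paper also leaves to the cited prior work.
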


\begin{theorem}[Progress Property of $\Lchan$]
Assume that $\varnothing;\varnothing;\varnothing;\varnothing\vdash\Pi_1:\hat\tau$ is derivable and $\rho(v)$ contains no channel endpoins for every $v:\hat\tau$. Then 
\begin{itemize}
\item $\Pi_1$ is a singleton mapping $[0\mapsto v]$ for some $v$, or 
\item $\Pi_1\rightarrow\Pi_2$ holds for some $\Pi_2$ s.t. $\rho(\Pi_2)\in\res$.
\end{itemize}
\end{theorem}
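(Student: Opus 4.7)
My plan is to reduce the theorem to the already-established progress property for $\Ldep$ (via the fact stated in \cref{sec:session-types} that $\Lchan$ is a conservative extension of $\Ldep$) combined with a \emph{DF-Reducibility} argument adapted from \cite{Xi:2016ue} to handle channel-based communication. The overall structure is a case analysis on the pool $\Pi_1$: if $\dom(\Pi_1) = \{0\}$ and $\Pi_1(0)$ is a value, we land in the first bullet; otherwise I must exhibit either a thread that reduces locally or a pair of threads that reduce jointly.

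First I would dispatch the easy cases. For any thread whose top-level redex is a construct inherited from $\Ldep$, including the $\foralli$, $\foralle$, $\guardi$, and $\assert$ elimination forms, progress for $\Ldep$ supplies either a value or a local reduction step, giving an instance of \textbf{pr0} (or \textbf{pr2} if the thread reduces to $\langle\rangle$ with $t>0$). A \texttt{thread\_create} redex fires via \textbf{pr1}. The proof-function API constants $\texttt{unify}$, $\texttt{exify}$, $\texttt{itet}$, $\texttt{itef}$, and $\texttt{recurse}$ only change types and can always be stepped, so they never block. Hence the only remaining threads are those blocked on one of $\texttt{send}$, $\texttt{recv}$, $\texttt{offer}$, $\texttt{choose}$, $\texttt{close}$, $\texttt{wait}$, or $\texttt{cut}$.

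The core of the argument is showing that when every non-terminated thread is blocked on a communication API, one can always find a matching pair of dual redexes on endpoints of the same channel. This is exactly where DF-Reducibility from \cite{Xi:2016ue} is invoked: one defines a logical-relation-style predicate over channel-indexed session types that certifies, for each blocked thread, the existence of a complementary thread holding the dual endpoint and poised on the dual operation. The side condition that $\rho(v)$ contains no channel endpoints for any value $v:\hat\tau$ rules out hiding a channel inside a stuck closed value, ensuring the invariant is preserved along reductions. The invariant then implies that at least one of \textbf{pr-msg}, the branch/choose rule, the close/wait rule, or the cut rule (from \cref{fig:sessionreduction} and \cref{fig:sessionreductioncut}) is applicable, yielding the second bullet.

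The main obstacle I expect is lifting the DF-Reducibility predicate uniformly across the new session type constructors $\quan$, $\ite$, and $\fix$. For $\ite$ and $\fix$ the argument closely mirrors \cite{Xi:2016ue} once one observes that $\texttt{itet}/\texttt{itef}/\texttt{recurse}$ are proof-only coercions, so the predicate should be closed under these static equivalences. For $\quan$ the subtlety is that one side of the channel picks a witness via $\texttt{unify}$ before any dependence in $f(s)$ becomes visible, while the dual side only learns the witness after the fact via $\texttt{exify}$; because both are proof functions that are erased before runtime, however, they do not interfere with the pairing of real communication redexes. I would therefore define the DF-Reducibility predicate by induction on the \emph{erasure} of the session type (stripping $\quan$, $\ite$, and $\fix$), so that the dependent case reduces cleanly back to the non-dependent setting already treated in \cite{Xi:2016ue}.
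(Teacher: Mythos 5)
Your proposal takes essentially the same route as the paper: the paper likewise establishes progress for $\Lchan$ by observing that \texttt{unify}/\texttt{exify}/\texttt{itet}/\texttt{itef}/\texttt{recurse} are proof functions with no dynamic semantics, so the pool reductions of $\Lchan$ coincide with those of the earlier channel calculus, and then invokes the \emph{DF-Reducibility} technique of \cite{Xi:2016ue} together with the fact that $\Ldep$ and $\Lchan$ are conservative extensions of $\Lzero$ (note the paper phrases conservativity over $\Lzero$, not over $\Ldep$ as you do). Your erasure-indexed DF-Reducibility predicate --- stripping $\quan$, $\ite$, and $\fix$ so the dependent case collapses to the non-dependent one --- is precisely a way of making formal the paper's claim that ``the dynamic semantics are the same,'' so your argument is the same in substance, merely more detailed than the paper, which defers entirely to \cite{Xi:2016ue,Xi:1999bh}.
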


\begin{theorem}[Soundness of $\Lchan$]
Assume that $\varnothing;\varnothing;\varnothing;\varnothing\vdash\Pi_1:\hat\tau$ is derivable and $\rho(v)$ contains no channel endpoins for every $v:\hat\tau$. Then for any $\Pi_2$ satisfying $\rho(\Pi_2)\in\res$, $\Pi_1\rightarrow^*\Pi_2$ implies either $\Pi_2$ is a singleton mapping $[0\mapsto v]$ for some $v$, or $\Pi_2\rightarrow\Pi_3$ for some $\Pi_3$ s.t. $\rho(\Pi_3)\in\res$.
\end{theorem}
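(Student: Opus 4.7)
The plan is to follow exactly the strategy used for the Soundness of $\Lzero$ earlier in the paper: induct on the length $n$ of the reduction sequence $\Pi_1 \rightarrow^* \Pi_2$, alternating applications of the Subject Reduction and Progress Property of $\Lchan$ just stated.

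In the base case $n = 0$ we have $\Pi_1 = \Pi_2$, so the hypothesis $\rho(\Pi_2) \in \res$, the typing derivation $\varnothing;\varnothing;\varnothing;\varnothing \vdash \Pi_1 : \hat\tau$, and the side condition that $\rho(v)$ contains no channel endpoints for every $v : \hat\tau$ together let me invoke Progress directly. This yields either $\Pi_2 = [0 \mapsto v]$ for some value $v$, or $\Pi_2 \rightarrow \Pi_3$ for some $\Pi_3$ with $\rho(\Pi_3) \in \res$, which is exactly the disjunction required. In the inductive step, decompose $\Pi_1 \rightarrow \Pi' \rightarrow^* \Pi_2$ where the tail has length $n$; applying the induction hypothesis to $\Pi'$ requires a typing derivation $\varnothing;\varnothing;\varnothing;\varnothing \vdash \Pi' : \hat\tau$, which Subject Reduction furnishes as soon as $\rho(\Pi') \in \res$ is established. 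The side condition on values of type $\hat\tau$ is preserved automatically, since $\hat\tau$ itself is unchanged by reduction.

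The main (and essentially only) obstacle is verifying $\rho(\Pi') \in \res$ for the intermediate pool $\Pi'$, so that Subject Reduction applies. I would discharge this by a short auxiliary lemma, proved by cases on the pool reduction rules --- both those inherited from $\Lzero$ and the new session rules in \cref{fig:sessionreduction} and \cref{fig:sessionreductioncut} --- showing that each single-step reduction from a pool $\Pi$ with $\rho(\Pi) \in \res$ produces a successor $\Pi'$ with $\rho(\Pi') \in \res$. Non-session reductions leave $\rho$ essentially unchanged, the communication rules merely relocate values between threads, and the rules for \texttt{create}, \texttt{close}, and \texttt{wait} allocate or consume endpoints in dual pairs in a manner consistent with the definition of $\res$ given in \cite{Xi:2016ue}. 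Once this bookkeeping lemma is in place, the overall argument reduces to the same ``direct corollary of Subject Reduction and Progress'' one-liner that proved the analogous theorem for $\Lzero$.
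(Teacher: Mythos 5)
Your overall skeleton is the paper's own: the paper proves the $\Lzero$ analogue by the one-liner ``follows directly from Subject Reduction and Progress'' and simply asserts that the $\Lchan$ versions carry over, so your induction on the length of $\Pi_1\rightarrow^*\Pi_2$, threading Subject Reduction of $\Lchan$ along the chain and invoking Progress of $\Lchan$ at the end, is exactly the intended unfolding, and your base case is handled correctly. You are also right that the only substantive content beyond the one-liner is the resource bookkeeping for the intermediate pools, which the paper glosses over entirely.

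However, your bookkeeping lemma is invoked with a premise you do not have. You propagate $\rho(\cdot)\in\res$ \emph{forward} from $\Pi_1$, but $\rho(\Pi_1)\in\res$ is not among the theorem's hypotheses, and typability does not supply it: the rule \textbf{ty-res} types any constant resource against the signature, so a pool containing an auxiliary thread such as $\texttt{wait}(ch_{i,r})$, whose dual endpoint occurs in no thread, is derivable with all four contexts empty yet has $\rho\notin\res$; the hypothesis the theorem does give you, $\rho(\Pi_2)\in\res$, sits at the wrong end of the chain to seed your forward propagation. A second, related problem is that your lemma as stated --- ``by cases on the pool reduction rules'' for an arbitrary pool --- is false: under \textbf{pr0}, reducing $\dyn{if}~\dyn{false}~\dyn{then}~e_1~\dyn{else}~e_2$ replaces $\rho(e_1)$ by $\rho(e_2)$ in the pool's multiset (note that $\rho$ of a conditional counts only the then-branch), and these agree only via the side condition $\rho(e_1)=\rho(e_2)$ of \textbf{ty-if}; so validity preservation holds only for \emph{typed} pools and must be proved simultaneously with Subject Reduction, not as a freestanding case analysis. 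Both issues admit a routine repair: take $\rho(\Pi_1)\in\res$ as an additional (and surely intended) hypothesis --- it holds for any genuine initial pool, whose threads mention no channel constants --- and prove the combined statement that $\varnothing;\varnothing;\varnothing;\varnothing\vdash\Pi:\hat\tau$, $\rho(\Pi)\in\res$, and $\Pi\rightarrow\Pi'$ together yield $\rho(\Pi')\in\res$ and $\varnothing;\varnothing;\varnothing;\varnothing\vdash\Pi':\hat\tau$, by one induction interleaving your case analysis with Subject Reduction. With that strengthening in place, your argument goes through verbatim and coincides with what the paper's one-line proof tacitly assumes.
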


\section{Implementations}\label{sec:implementations}

Our implementations consist of two parts, a session API library in ATS,
and a runtime implementation of the session API (referred to as a
\emph{back-end}) in a target language. ATS is a programming language
based on \ATS, and it supports a style of \emph{co-programming} with
many target languages by compiling an ATS program into the target
language. Its default compilation target is C. For the purpose of this
paper, besides a native back-end in ATS/C itself, we also support
back-ends in Erlang/Elixir and JavaScript. A session-typed program will
be firstly type-checked based on the type system of \(\Lchan\), and then
compiled into a target language (if passed type checking). The
compiler/interpreter of the target language will then be invoked to
compile/interpret the program together with the corresponding back-end.
Although formalized as synchronous sessions (for the sake of
simplicity), our implementations can fully support asynchronous
communications. Our linear typing guarantees \emph{no resources leaks}.
For instance, in our Erlang/Elixir back-end, there are no process leaks
related to channels.

Our session API library in ATS is (almost) a direct translation of those
listed in \cref{fig:sessionapi}, except for some slight syntax
differences. For example, \texttt{send} is translated into the
followings.

\begin{Shaded}
\begin{Highlighting}[]
\KeywordTok{fun}\NormalTok{ send }\DataTypeTok{\{r,r0:role|r0==r\}} \DataTypeTok{\{p:stype\}} \DataTypeTok{\{v:vtype\}} 
\NormalTok{         (!chan(r,msg(r0,v)::p) >> chan(r,p), v): void}
\end{Highlighting}
\end{Shaded}
where \texttt{\{\}} is universal quantification (and \texttt{{[}{]}} is
existential quantification), \texttt{!} means call-by-value, which
indicates \emph{not} to consume a linear value, and
\texttt{\textgreater{}\textgreater{}} means to \emph{change} the linear
type after the function returns. As mentioned before, whenever possible,
the API will change the types of endpoints directly instead of relying
on continuations. There are a couple other minor changes. First, with
guarded recursive data types \cite{Xi:2003dp} and pattern matching, the
API formulates \texttt{offer}/\texttt{choose} in a simpler way as
follows,

\begin{Shaded}
\begin{Highlighting}[]
\KeywordTok{datatype}\NormalTok{ choice (stype, stype, stype) = }
\NormalTok{| \{p,q:stype\} Left  (p, p, q) }\KeywordTok{of}\NormalTok{ ()}
\NormalTok{| \{p,q:stype\} Right (q, p, q) }\KeywordTok{of}\NormalTok{ ()}
\KeywordTok{fun}\NormalTok{ offer  }\DataTypeTok{\{r,r0:role|r0!=r\}} \DataTypeTok{\{p,q:stype\}} 
\NormalTok{           (!chan(r,branch(r0,p,q)) >> chan(r,s)): }\DataTypeTok{[s:stype]}\NormalTok{ choice (s,p,q)}
\KeywordTok{fun}\NormalTok{ choose }\DataTypeTok{\{r,r0:role|r0==r\}} \DataTypeTok{\{p,q,s:stype\}} 
\NormalTok{           (!chan(r,branch(r0,p,q)) >> chan(r,s), choice(s,p,q)): void}
\end{Highlighting}
\end{Shaded}
where \texttt{choice} is a guarded recursive data type that essentially
captures the equality on session types. Also, since it is existentially
quantified, the type-checker will enforce \emph{exhuastive} case
analysis on the received choice to instantiate \texttt{s}. Note that
\texttt{s} as in \texttt{\textgreater{}\textgreater{}\ chan(r,s)} is in
the scope of quantifier \texttt{{[}s:stype{]}} even though it appears
before the quantifier.

We briefly mention some technical details below and refer the readers to
\url{http://multirolelogic.org} for pointers to all the source code. Due
to space limitation, we assume that the readers are reasonably familiar
with these target languages.

\subsection{Message-passing Back-end in
Erlang/Elixir}\label{message-passing-back-end-in-erlangelixir}

Erlang offers functional distributed programming abilities through its
powerful virtual machine. Elixir offers a more friendly syntax and
better tooling on top of the same runtime. In Erlang/Elixir, every
process has a unique \texttt{pid} (process identifier), and an
associated mailbox. Communications are achieved via message-passing
asynchronously and can be done across different nodes. In this
particular implementation, \texttt{choose} and \texttt{offer} are
implemented as \texttt{send} and \texttt{receive}, respectively.
\texttt{close} and \texttt{wait} are implemented both to terminate the
process directly. This back-end relies on order-preserving messages and
is inherently asynchronous and distributed.

In Erlang/Elixir back-end, a message is represented by a label, a
\texttt{pid}, a \texttt{ref}, and a payload. A channel endpoint is
identified through a combination of a \texttt{pid} and a \texttt{ref}.
The message labels are used to identify the kind of messages, e.g.
\texttt{:send}/\texttt{:receive}. The \texttt{pid} is used to locate the
message's origin, or an endpoint's mailbox. The \texttt{ref}'s are
globally unique references, generated through a built-in function
\texttt{make\_ref} for every endpoint. The need for \texttt{ref} is
discussed in \cite{Mostrous:2011vu}. Intuitively speaking, the
\texttt{ref} acts as a signature of the message and every out-going
message is signed using the sending endpoint's own \texttt{ref}. Thus it
can be used both to distinguish in-session messages from out-of-session
messages\footnote{This is because that knowing just the \texttt{pid} is
  enough for any process to randomly inject messages to its mailbox.},
and to identify requests from the endpoint's owning process and messages
from the dual endpoint.

An endpoint will run a loop in a dedicated process and talk to the
owning process through messages-passing. The endpoint loop keeps track
of two parameters: \texttt{self}, which is its own signature as a
\texttt{ref}, and \texttt{dual}, which is the dual endpoint's
\texttt{pid} and \texttt{ref}. In every iteration, the loop will receive
a request from the owning process by pattern matching against messages
signed by \texttt{self}, and then process the request accordingly. For
instance, when the owning process sends a message with label
\texttt{:receive} signed with \texttt{self}, the endpoint will then
pattern match against messages in the endpoint's mailbox and block until
it finds the first message whose label is \texttt{:send} and is signed
by the dual endpoint's \texttt{ref}, which is \texttt{dual.ref}. The
found message will then be delivered to the owning process's mailbox,
fulfilling the request.

\begin{figure}[h!]
\caption{Example {\tt cut} in Erlang/Elixir Back-end}
\label{fig:elixir}
\centering
\begin{tikzpicture}[node distance=1.2cm]

\tikzstyle{owner}=[rectangle]
\tikzstyle{cut}=[circle,dotted]
\tikzstyle{endpoint}=[circle]
\tikzstyle{every node}=[thick,inner sep=0pt,draw=black!75,minimum size=6mm]

\begin{scope}
    \node [owner] (p2) {$P_2$};
    \node [owner] (p1) [left of=p2,xshift=-0.5cm]{$P_1$};
    \node [owner] (p3) [right of=p2,xshift=0.5cm] {$P_3$};

    \node [endpoint] (a0) [below of=p1] {$A$};
    \node [cut] (a1) [below of=p2,xshift=-0.5cm] {$A'$};
    \node [cut] (b0) [below of=p2,xshift=0.5cm] {$B'$};
    \node [endpoint] (b1) [below of=p3] {$B$};

    \draw [latex'-latex'] (p1) -- (a0);
    \draw [latex'-latex'] (p2) -- (a1);
    \draw [latex'-latex'] (p2) -- (b0);
    \draw [latex'-latex'] (p3) -- (b1);
    \draw [latex'-latex',double] (a0) -- (a1);
    \draw [latex'-latex',double] (b0) -- (b1);
\end{scope}
    
\begin{scope}[xshift=5.5cm]
    \node [owner] (p1) {$P_1$};
    \node [owner] (p3) [right of=p1] {$P_3$};
    \node [owner] (p2) [right of=p3] {$P_2$};

    \node [endpoint] (a0) [below of=p1] {$A$};
    \node [endpoint] (b0) [below of=a0] {$B'$};
    \node [endpoint] (b1) [below of=p3] {$B$};
    \node [endpoint] (a1) [below of=b1] {$A'$};

    \draw [latex'-latex'] (p1) -- (a0);
    \draw [-latex',dashed] (a0) -- (b0);
    \draw [latex'-latex'] (p3) -- (b1);
    \draw [-latex',dashed] (b1) -- (a1);
    \draw [latex'-latex',double] (a0) -- (a1);
    \draw [latex'-latex',double] (b0) -- (b1);

\end{scope}

\draw [->] (2.8,-1.2) -- (4.5,-1.2) node [draw=none,midway,above] {\tt{cut}$(A',B')$};
 
\end{tikzpicture}
\end{figure}
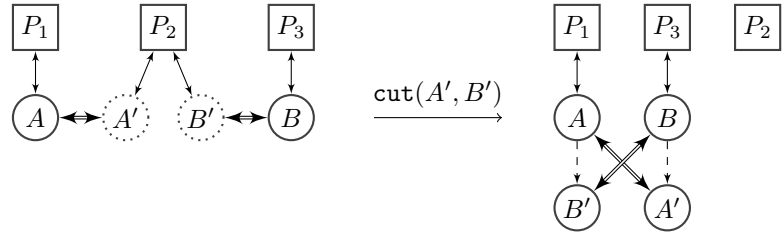

\texttt{cut} is implemented as delegation, where \texttt{:send} requests
are handled as before, but \texttt{:receive} requests are delegated to
an endpoint involved in a \texttt{cut}. Suppose we have dual endpoints
\texttt{A:chan(0,p)}/\texttt{A\textquotesingle{}:chan(1,p)} and dual
endpoints \texttt{B\textquotesingle{}:chan(0,p)}/\texttt{B:chan(1,p)} of
some session type \texttt{p}, and we are to perform
\texttt{cut(A\textquotesingle{},B\textquotesingle{})}. The owning
process \(P_2\) of both \texttt{A\textquotesingle{}} and
\texttt{B\textquotesingle{}}, will send a \texttt{:cut} request to
\texttt{A\textquotesingle{}} and \texttt{B\textquotesingle{}}, with a
payload of the \texttt{pid} and \texttt{ref} of
\texttt{B\textquotesingle{}} and \texttt{A\textquotesingle{}},
respectively. The info about \texttt{B\textquotesingle{}} will be
forwarded to \texttt{A}, and \texttt{A} will delegate \texttt{:receive}
requests to \texttt{B\textquotesingle{}}. Similarly, the info about
\texttt{A\textquotesingle{}} will be forwarded to \texttt{B}. and
\texttt{B} will delegate \texttt{:receive} requests to
\texttt{A\textquotesingle{}}. A delegated request will change its
signature from the original requester's \texttt{ref}, to the delegator's
\texttt{ref}, so that the delegator can still process the request as if
the request comes from its owning process. An example is illustrated in
\cref{fig:elixir}, where \(\leftrightarrow\) is for endpoint ownership,
\(\Leftrightarrow\) connects dual endpoints, and dashed arrow denotes
delegation. Now, if \(P_1\) sends a message to \(P_3\), it will be sent
through endpoint \texttt{A}, and then delivered to the mailbox of
\texttt{A\textquotesingle{}}. When \(P_3\) tries to receive the message,
it will send a \texttt{:receive} request to \texttt{B}, and \texttt{B}
delegates it to \texttt{A\textquotesingle{}}, and
\texttt{A\textquotesingle{}} will fulfill the request since the message
is in its mailbox.

We also have a shared memory implementation in ATS/C which implements
our own message queue guarded by locks, and a continuation-based
implementation in JavaScript using WebWorker.

\section{Examples}\label{sec:examples}

We will show some example dependent session types or programs in the
followings. We will assume that the server plays role 0 (\texttt{S}),
and the client plays role 1 (\texttt{C}). We will use ATS's ML-like
syntax to present the program (after omitting some insignificant
details), which can be easily mapped to \(\Lchan\). We also use syntax
sugar and implementation optimizations described in
\cref{sec:implementations} and extensions from \cref{sec:ext}. Again,
the source code can be found online through
\url{http://multirolelogic.org}, and all the code can be type-checked,
compiled, and executed.

\begin{example}[Counter]
One can easily define a counter as an integer stream. But more precisely, we can define dependently session typed constructor \stype{counter} as
\end{example}
\[
\stype{counter} (n{:}\sort{int}) \defeq \branch(\cli, \msg(\srv, \type{int}(n)) \coloncolon \stype{counter}(n+1),\nil(\cli)) 
\] which says, in every iteration, the client can choose to receive an
integer \(n\) and let the session continue from \(n+1\), or to end the
session. \texttt{counter} makes use of higher-order fixpoint encoding,
\texttt{fix}, which is better explained in \cref{ex:array}. On top of
\texttt{counter}, we can define a service \texttt{from} that given an
integer \(n\), returns an endpoint of session type
\texttt{counter}\((n)\). \[
\stype{from} \defeq \quan(\cli, \lambda n{:}int.\msg(\cli,\type{int}(n))\coloncolon\msg(\srv,\type{chan}(\cli,\stype{counter}(n)))\coloncolon\nil(\cli))
\] Since \(\type{chan}\) is a linear type constructor, a channel can
then be sent over another channel just as other linear values, and
\texttt{send} will consume it. This forms a higher-order session type.
We omit any testing code since it is similar to \cref{ex:array}. Due to
space limitation, we push other examples to \cref{app:ex}.

\section{Extensions}\label{sec:ext}

We very briefly describe possible extensions of \(\Lchan\). First, it is
straightforward to add \emph{general recursion} to our language (not to
the session type) as has been done in \cite{Xi:2016ue}. Second, one can
always introduce a \emph{higher-order} \texttt{fix} into session types,
such as
\[\stype{fix}(\lambda f{:}(\vv\sigma\rightarrow\sort{stype}).\lambda \vv a{:}\vv\sigma.\pi),\vv s)\]
where \(f\) is a static function of sort
\((\vv\sigma\rightarrow\sort{stype})\rightarrow\vv\sigma\rightarrow\sort{stype}\),
and \(\vv s\) are static terms of matching sorts \(\vv\sigma\).
Correspondingly, we need to introduce another \texttt{recurse} to unroll
it. A higher-order \texttt{fix} will input static terms to form a new
session type that dependents on these static terms. Thus these are also
a form of dependent session types. Third, binary branching can be
extended as well. For instance, we can introduce
\(\branch(i,\pi_1,\pi_2,\pi_3), i\in\{0,1,2\}\) and cooresponding
session API similar to \texttt{ite} to unroll it.

More importantly, we can extend \(\Lchan\) to support \emph{multi-party
session types} based on \cite{Xi:2017wv}. Roles will be extended from
\(\{0,1\}\) to a larger set of natural numbers, \(\chan(r,\pi)\) will be
extended to \(\chan(R,\pi)\) where \(R\) is now a \emph{set} of roles.
This is essential because of the need to represent one party's
\emph{complement} roles, which has to be a set. Guards in session API
will change from \(r=r_0\) to \(r_0\in R\), and from \(r\neq r_0\) to
\(r_0\notin R\). \texttt{cut} will be extended to another form based on
\cite{Xi:2017wv}.

Also, both predicative quantification (dependent types) and
higher-order/impredicative quantification (polymorphism) are supported
by \ATS, and our formulation naturally supports \emph{polymorphic
session types} in the sense of \cite{Caires:2013jb} since \texttt{quan}
and higher-order \texttt{fix} can input session types to form a session
type. We give such an example in \cref{ex:poly}. However, we focus on
dependent session types in this paper.

\section{Related Works}\label{sec:related-works}

To our best knowledge, \cite{Toninho:2011in} is the only other
formalization of dependent session types in a similar sense as ours. It
is based on intuitionistic linear type theory for a variant of
\(\pi\)-calculus, which extends the work in \cite{Caires:2010gi} where a
kind of Curry-Howard isomorphism is established. The work concerns with
two layers, an unspecified dependently typed layer for functional terms
that assign meanings to atomic propositions, and a session typed layer
that composes sessions and interprets linear logic connectives.
Quantifiers connect these two layers where universal quantifier inputs a
functional term and existential quantifier outputs a functional term.
Their line of works presents session types in a polarized style,
corresponding to the left/right introduction/elimination rules of the
logic. Our work is different in many ways. Our work is based on
\(\lambda\)-calculus instead of \(\pi\)-calculus/linear logic, and we
have shown our concrete implementations to support the argument that
such formulation is practical. Quantifiers are handled slightly
differently. We present unpolarized global quantifiers in the session
type, then locally interpreted it as \(\forall\)/\(\exists\) through our
session API. However, the input/output action is not limited to follow
the quantifiers immediately as they do. Our unpolarized style is easier
to extend to multi-party sessions, while theirs is inherently binary due
to the nature of duality in the logic. \cite{Caires:2013jb} and
\cite{Pfenning:2011ce} are based on \cite{Toninho:2011in} which focus on
polymorphic session types and proof-carrying code in session types,
respectively. Our work supports polymorphic session types in the sense
of \cite{Caires:2013jb} but we do not have space to formally address it.

There are many attempts to integrate session types into practical
programming languages.
\cite{Pucella:2008dt,Lindley:2016du,Sackman:2008ux} embed session types
into Haskell, \cite{Scalas:2016uh} in Scala, \cite{Jespersen:2015ka} in
Rust, \cite{Ng:2012hu} in C, and \cite{Hu:2008hta,Ng:2011go,Hu:2010fj}
in Java. The single sailent feature is that we support dependent session
types while none of above supports. Our type system also guarantees
linearity and duality natively and staticly without any special
encoding. Due to the lack of linear types, \cite{Lindley:2016du} relies
on an encoding of linear \(\lambda\)-calculus,
\cite{Pucella:2008dt,Sackman:2008ux} rely on indexed monads.
\cite{Jespersen:2015ka} makes use of affine types in Rust that
guarantees ``at most once'' usage which is still not enough. Other works
did not capture linearity in the type system. Duality is encoded as a
proof system using type classes in \cite{Pucella:2008dt,Lindley:2016du},
and using traits in \cite{Jespersen:2015ka}. \cite{Scalas:2016uh} uses
Scala's \texttt{In{[}-{]}}/\texttt{Out{[}-{]}} types where \texttt{-} is
a \emph{local} type, and similarly \cite{Sackman:2008ux} uses
\texttt{dual/notDual}, and they are both similar to our prior work using
\textbf{chanpos} and \textbf{channeg}. \cite{Hu:2008hta} ensures duality
in the runtime and \cite{Ng:2011go,Hu:2010fj} are its extensions. There
are other works proposing new languages to support session types, such
as \cite{Toninho:2013iu,Gay:2010gt,Wadler:2012ua} and
\texttt{SILL}\footnote{\url{https://github.com/ISANobody/sill}}
\cite{Caires:2010gi}, but these are not as practical in their current
states.

There are other works that are loosely related to ours, such as those
investigating links between logics and session types
\cite{Wadler:2012ua,Wadler:2014bta,Caires:2010gi}. Please refer to
\cite{Xi:2016ue} for more due to space limitations.

\section{Conclusion}\label{sec:conclusion}

We have presented a form of dependent session type system \(\Lchan\)
based on \(\lambda\)-calculus using unpolarized presentation. Our type
system handles quantification over static terms in session types,
allowing more precise session protocols to be described elegantly.
Linearity is guaranteed statically by the type system, and duality is
guaranteed by a combination of global session types, roles at a local
endpoint, and guards in the session API. \(\Lchan\) also supports
delegations, higher-order sessions, polymorphic sessions, and
recursively defined sessions. Our type system enjoys subject reduction
and progress properties, which guarantees session fidelity and
deadlock-freeness. We have shown the practicality of \(\Lchan\) by
providing a back-end in Erlang/Elixir, which is asynchronous,
distributed, and leak-free. Our formulation can also be adapted to
multi-party sessions based on multirole logic and we leave this as a
future work.

\vfill
\appendix

\section{Appendix - More Examples}\label{app:ex}

\begin{example}[Array]\label[ex]{ex:array}
One can safely send an array by sending a length $n$ first, then followed by $n$ messages for $n$ elements of the array. Such a channel can be encoded in the following dependent session types.
\begin{align*}
\stype{repeat}  (\tau{:}\sort{type}, n{:}\sort{int}) &\defeq \stype{ite}(n>0, \msg(\srv, \tau) \coloncolon \stype{repeat}(\tau,n-1),\nil(\srv)) \\
\stype{array}   (\tau{:}\sort{type}) &\defeq \quan(\srv, \lambda n{:}\sort{int}.\msg(\srv, \type{int}(n))::\stype{repeat}(\tau, n))
\end{align*}
\end{example}
where \texttt{repeat} is a recursive session type constructor written in
direct style, and its desugared version is as follows,
\begin{multline*}
\stype{repeat}  (\tau{:}\sort{type}, n{:}\sort{int}) \defeq \\
\fix(
  \lambda p{:}\sort{int}\rightarrow\sort{stype}.
    \lambda n{:}\sort{int}.
      \ite(n>0, \msg(\srv, \tau) \coloncolon p(n-1), \nil(\srv)), n) 
\end{multline*}
Note that \texttt{repeat} and \texttt{array} are session type
constructors, which are just static functions returning static terms of
sort \(\sort{stype}\). Also, the \texttt{fix} is a higher-order fixpoint
described in \cref{sec:ext}. \texttt{repeat}\((\tau,n)\) then says, if
\(n>0\) is true, the session proceeds to allow sending of a value of
type \(\tau\) from party \texttt{S} (\(\msg(\srv,\tau)\)), then proceeds
as \texttt{repeat}\((\tau,n-1)\). If \(n>0\) is false, the session can
only be terminated by party \texttt{S} (\(\nil(\srv)\)). Similarly,
\texttt{array} says, party \texttt{S} is to send an integer \(n\)
followed by \(n\) repeated messages described by
\texttt{repeat}\((\tau,n)\). Therefore, the server side can be
programmed as follows,

\begin{Shaded}
\begin{Highlighting}[]
\KeywordTok{fun}\NormalTok{ server }\DataTypeTok{\{a:type\}} \DataTypeTok{\{n:nat\}} 
\NormalTok{    (ch:chan(S,array(a)), data:arrref(a,n), len:int(n)): void = }\KeywordTok{let} 
    \KeywordTok{prval}\NormalTok{ () = unify ch }\CommentTok{(* locally interprets the quantifier *)}
      \KeywordTok{val}\NormalTok{ () = send (ch, len) }\CommentTok{(* provide an instance for the quantifier *)}
    \KeywordTok{fun}\NormalTok{ sendarr }\DataTypeTok{\{a:type\}} \DataTypeTok{\{n,m:nat|n<=m\}} 
\NormalTok{        (ch:chan(S,repeat(a,n)), x:int(n), data:arrref(a,m), len:int(m)): void = }
        \KeywordTok{if}\NormalTok{ x = }\DecValTok{0} \KeywordTok{then} \KeywordTok{let} \KeywordTok{prval}\NormalTok{ () = recurse ch}
                          \KeywordTok{prval}\NormalTok{ () = itef ch}
                       \KeywordTok{in}\NormalTok{ close ch }\KeywordTok{end} 
        \KeywordTok{else} \KeywordTok{let} \KeywordTok{prval}\NormalTok{ () = recurse ch}
                 \KeywordTok{prval}\NormalTok{ () = itet ch}
                   \KeywordTok{val}\NormalTok{ () = send (ch, data[len-x])}
              \KeywordTok{in}\NormalTok{ sendarr (ch, x-}\DecValTok{1}\NormalTok{, data, len) }\KeywordTok{end} 
\KeywordTok{in}\NormalTok{ sendarr (ch, len, data, len) }\KeywordTok{end}
\end{Highlighting}
\end{Shaded}
And its type is \[
\texttt{server} : \forall\tau{:}\sort{type}.\forall n{:}\sort{nat}.(\type{chan}(\srv,\stype{array}(\tau)), \type{arrref}(\tau,n), \type{int}(n))\rightarrow\type{1}
\] where \texttt{data} is the array to be sent, whose type is indexed by
the type of elements and the length of array. \texttt{len} is the length
of array, whose type is a singleton integer that equals the length of
\texttt{data}. \texttt{prval} denotes a proof value that has no runtime
semantics. After type-checking has passed, these values will be
eliminated.

\begin{example}[Queue] The example comes from {\tt SILL}\footnote{\url{https://github.com/ISANobody/sill}}, an implementation of binary session types based on \cite{Caires:2010gi}. As compared to a simple queue, we define a dependently typed queue indexed by its length as follows, with the higher-order $\fix$ introduced in \cref{sec:ext}, 
\end{example}
\[
\arraycolsep=1pt
\begin{array}{rl}
\stype{queue}(\tau{:}\sort{type}, n{:}\sort{int})\defeq
\branch(\cli, &\msg(\cli,\tau)\coloncolon\stype{queue}(\tau,n+1),\\ 
           &\ite(n>0, \msg(\srv,\tau)\coloncolon\stype{queue}(\tau,n-1),\nil(\srv)))
\end{array}
\] where the client can choose to either enqueue or dequeue an element
of type \(\tau\). In the dequeue case, instead of encoding an optional
value as a \texttt{branch} to deal with dequeuing from an empty queue,
we use the length of the queue to decide the continuation of the session
type. If the length \(n\) is greater than 0, the endpoint allows
dequeuing. Otherwise, the endpoint can only be closed. As mentioned
before, \texttt{itet}/\texttt{itef} are proof functions that have no
runtime cost, while a non-dependently session typed queue will require
\texttt{choose}/\texttt{offer} that need to communicate a tag at
runtime. We follow their example, and present the \texttt{elem} function
as follows, which given a queue and an element \texttt{e}, constructs a
new queue where \texttt{e} will be inserted into the queue as if it is
the first element, and \texttt{e} will be the first to be dequeued.

\begin{Shaded}
\begin{Highlighting}[]
\KeywordTok{fun}\NormalTok{ elem }\DataTypeTok{\{a:type\}} \DataTypeTok{\{n:nat\}} 
\NormalTok{    (q:chan(C,queue(a,n)), e:a): chan(C,queue(a,n+}\DecValTok{1}\NormalTok{)): void = }\KeywordTok{let}
        \CommentTok{(* out: endpoint held by the server}
\CommentTok{         * inp: endpoint to the tail of queue}
\CommentTok{         *)}
        \KeywordTok{fun}\NormalTok{ server }\DataTypeTok{\{n:nat\}}
\NormalTok{            (out:chan(S,queue(a,n+}\DecValTok{1}\NormalTok{)), inp:chan(C,queue(a,n))): void = }
            \KeywordTok{let} \KeywordTok{prval}\NormalTok{ () = recurse out }\CommentTok{(* unroll the fixpoint *)}
                  \KeywordTok{val}\NormalTok{  c = offer out   }
             \KeywordTok{in}\NormalTok{ case c }\KeywordTok{of}
                \CommentTok{(* dequeue case *)}
\NormalTok{                | Right () => }\KeywordTok{let} \KeywordTok{prval}\NormalTok{ () = itet  out}
                                    \KeywordTok{val}\NormalTok{ () = send (out, e)}
                               \CommentTok{(* let `inp` delegate the server *)}
                               \KeywordTok{in}\NormalTok{ cut (out, inp) }\KeywordTok{end} 
                \CommentTok{(* enqueue case *)}
\NormalTok{                |  Left () => }\KeywordTok{let}   \KeywordTok{val}\NormalTok{  y = recv    out}
                                  \KeywordTok{prval}\NormalTok{ () = recurse inp}
                                    \KeywordTok{val}\NormalTok{ () = choose (inp, Left())}
                                    \KeywordTok{val}\NormalTok{ () = send   (inp, y)}
                               \KeywordTok{in}\NormalTok{ server (out, inp) }\KeywordTok{end}
            \KeywordTok{end}
    \KeywordTok{in} 
        \CommentTok{(* create the server thread, and return the client endpoint *)}
\NormalTok{        create (}\KeywordTok{lam}\NormalTok{ out => server (out, queue)) }
    \KeywordTok{end}
\end{Highlighting}
\end{Shaded}

\begin{example}[Polymorphism]
\label[ex]{ex:poly}
We define a polymorphic cloud service that, given any unlimited function, will provide replicated services of such function. The example is taken from \cite{Caires:2013jb} that makes use of higher-order quantification over session types, and high-order sessions. We define polymorphic session types as follows,
\begin{align*}
\stype{service}(\pi{:}\sort{stype})&\defeq
\branch(\cli,\msg(\srv,\chan(\cli,\pi))\coloncolon\stype{service}(\pi), \nil(\cli))
\\
\stype{cloud}&\defeq
\quan(\cli,\lambda\pi{:}\sort{stype}.\msg(\cli,\chan(\srv,\pi)\rightarrow\type{1})\coloncolon\stype{service}(\pi))
\end{align*}
\end{example}

Here, \(\stype{service}(\pi)\) is a polymorphic session type constructor
that says a client can repeatedly choose to use a service through a
newly created endpoint disciplined by session type \(\pi\), or to close
it. \texttt{cloud} is a polymorphic session type that says, as long as
the client sends an \emph{unlimited/non-linear} function that can
provide the functionality described by \(\pi\), the server will turn it
into a replicated service. Corresponding server and client programs
could be written like the followings.

\begin{Shaded}
\begin{Highlighting}[]
\KeywordTok{implement}\NormalTok{ server (ch:chan(S,cloud)): void = }\KeywordTok{let} 
    \KeywordTok{prval}\NormalTok{ () = exify ch }\CommentTok{(* locally interpret `quan` as `exists` *)}
      \KeywordTok{val}\NormalTok{  f = recv ch  }\CommentTok{(* receive the witness and output it to the user *)}
    \CommentTok{(* the `srv` function provides replicated services }
\CommentTok{     * by spawning a new endpoint every time the user requests}
\CommentTok{     *)}
    \KeywordTok{fun}\NormalTok{ srv }\DataTypeTok{\{p:stype\}}\NormalTok{ (ch:chan(S,service(p)), f:chan(S,p)->void): void = }
        \KeywordTok{let} \KeywordTok{prval}\NormalTok{ () = recurse ch}
              \KeywordTok{val}\NormalTok{  c = offer ch}
        \KeywordTok{in}\NormalTok{ case c }\KeywordTok{of}
           \CommentTok{(* the user chooses to close *)}
\NormalTok{           | Right () => wait ch }
           \CommentTok{(* the user requests one such service *)}
\NormalTok{           |  Left () => }\KeywordTok{let} \KeywordTok{val}\NormalTok{ ep = create (}\KeywordTok{lam}\NormalTok{ ch => f ch)}
                             \KeywordTok{val}\NormalTok{ () = send (ch, ep)}
                          \KeywordTok{in}\NormalTok{ srv (ch, f) }\KeywordTok{end}
        \KeywordTok{end}
\KeywordTok{in} 
\NormalTok{    srv (ch, f)}
\KeywordTok{end}

\KeywordTok{implement}\NormalTok{ client (ch:chan(C,cloud)): void = }\KeywordTok{let} 
    \CommentTok{(* This is an instance of the service that does printing *)}
    \KeywordTok{fun}\NormalTok{ echo (ch:chan(S,msg(C,string)::}\KeywordTok{end}\NormalTok{(C))): void = }
        \KeywordTok{let} \KeywordTok{val}\NormalTok{ () = print (recv ch)}
         \KeywordTok{in}\NormalTok{ wait ch }\KeywordTok{end} 

    \KeywordTok{prval}\NormalTok{ () = unify ch }\CommentTok{(* locally interpret `quan` as `forall` *)}
      \KeywordTok{val}\NormalTok{ () = send (ch, echo) }\CommentTok{(* provide an instance *)}
    \CommentTok{(* request the printing service n times *)}
    \KeywordTok{fun}\NormalTok{ prt (ch:chan(C,service(msg(C,string)::}\KeywordTok{end}\NormalTok{(C))), n:int): void =}
        \KeywordTok{let} \KeywordTok{prval}\NormalTok{ () = recurse ch }
        \KeywordTok{in} \KeywordTok{if}\NormalTok{ n <= }\DecValTok{0} 
           \KeywordTok{then}\NormalTok{ (choose (ch, Right()); close ch) }
           \KeywordTok{else} \KeywordTok{let} \KeywordTok{val}\NormalTok{ () = choose (ch, Left())}
                    \CommentTok{(* receive the endpoint and use the service *)}
                    \KeywordTok{val}\NormalTok{ ep = recv ch }
                    \KeywordTok{val}\NormalTok{ () = send (ep, "hello world!")}
                    \KeywordTok{val}\NormalTok{ () = close ep }
                 \KeywordTok{in}\NormalTok{ prt (ch, n-}\DecValTok{1}\NormalTok{) }\KeywordTok{end} 
        \KeywordTok{end} 
\KeywordTok{in} 
\NormalTok{    prt (ch, }\DecValTok{10}\NormalTok{)}
\KeywordTok{end}  
\end{Highlighting}
\end{Shaded}

\vfill
\clearpage

\section{Appendix - Figures}\label{appendix---figures}

\begin{figure}[h!]
\caption{Definition of $\rho(\cdot)$ in $\Lzero$}
\label{fig:rho}
\[
\begin{array}{rclrcl}
\rho(\dyn{fst}(e))                                      &=& \rho(e)                              &
\rho(x)                                                 &=& \varnothing                          \\
\rho(\dyn{snd}(e))                                      &=& \rho(e)                              &
\rho(\dcr)                                              &=& \{\textit{dcr}\}                     \\
\rho(\dyn{lam}~x.e)                                     &=& \rho(e)                              &
\rho(\dcx\,(e_1,\dotsc,e_n))                            &=& \rho(e_1)\uplus\dotsb\uplus\rho(e_n) \\
\rho(\dyn{app}(e_1,e_2))                                &=& \rho(e_1)\uplus\rho(e_2)             &
\rho(\langle e_1,e_2\rangle)                            &=& \rho(e_1)\uplus\rho(e_2)             \\
\rho(\dyn{let}~\langle x_1,x_2\rangle=e_1~\dyn{in}~e_2) &=& \rho(e_1)\uplus\rho(e_2)             &
\rho(\langle\rangle)                                    &=& \varnothing                          \\
\rho(\dyn{if}~e~\dyn{then}~e_1~\dyn{else}~e_2)          &=& \rho(e)\uplus\rho(e_1)             &  \\
\rho(\Pi)    &=& \multicolumn{4}{l}{\biguplus_{t}\rho(\Pi(t))~t\in\dom(\Pi)}                  \\
\rho(\theta) &=& \multicolumn{4}{l}{\biguplus_{x}\rho(\theta(x))~x\in\dom(\theta)}
\end{array}
\]
\end{figure} \begin{figure}[h!]
\caption{Typing Rules of $\Lzero$}
\label{fig:mtlc0typing}
\addtolength{\jot}{.1em}
\begin{gather*}
\prftree[r]{\bf ty-res}
	{\sig\vDash\dcr:\hat\delta}
	{\Gamma;\varnothing\vdash\dcr:\hat\delta}
\quad
\prftree[r]{\bf ty-cst}
	{\stackbox{
		$\sig\vDash\dcx:(\hat\tau_1,\dotsc,\hat\tau_n)\Rightarrow\hat\tau$\\
		$\Gamma;\Delta_i\vdash e_i:\hat\tau_i\quad1\leqslant i\leqslant n$}}
	{\Gamma;\Delta_1,\dotsc,\Delta_n\vdash\dcx\,(e_1,\dotsc,e_n):\hat\tau}
\\
\prftree[r]{\bf ty-var-i}
	{\Gamma, x:\tau;\varnothing\vdash x:\tau}
\quad
\prftree[r]{\bf ty-var-l}
	{\Gamma;\Delta,x:\hat\tau\vdash x:\hat\tau}
\quad
\prftree[r]{\bf ty-unit}
	{\Gamma;\varnothing\vdash\langle\rangle:\type{1}}
\\
\prftree[r]{\bf ty-tup-i}
	{\Gamma;\Delta_1\vdash e_1:\tau_1}
	{\Gamma;\Delta_2\vdash e_2:\tau_2}
	{\Gamma;\Delta_1,\Delta_2\vdash \langle e_1,e_2\rangle:\tau_1\times\tau_2}
\quad
\prftree[r]{\bf ty-tup-l}
	{\Gamma;\Delta_1\vdash e_1:\hat\tau_1}
	{\Gamma;\Delta_2\vdash e_2:\hat\tau_2}
	{\Gamma;\Delta_1,\Delta_2\vdash\langle e_1,e_2\rangle:\hat\tau_1\otimes\hat\tau_2}
\\
\prftree[r]{\bf ty-fst}
	{\Gamma;\Delta\vdash e:\tau_1\times\tau_2}
	{\Gamma;\Delta\vdash \dyn{fst}(e):\tau_1}
\quad
\prftree[r]{\bf ty-snd}
	{\Gamma;\Delta\vdash e:\tau_1\times\tau_2}
	{\Gamma;\Delta\vdash \dyn{snd}(e):\tau_2}
\\
\prftree[r]{\bf ty-tup-elim}
	{\Gamma;\Delta_1\vdash e_1:\hat\tau_1\otimes\hat\tau_2}
	{\Gamma;\Delta_2,x_1:\hat\tau_1,x_2:\hat\tau_2\vdash e_2:\hat\tau}
	{\Gamma;\Delta_1,\Delta_2\vdash\dyn{let}~\langle x_1,x_2\rangle=e_1~\dyn{in}~e_2:\hat\tau}
\\
\prftree[r]{\bf ty-lam-i}
	{(\Gamma;\varnothing),x:\hat\tau_1\vdash e:\hat\tau_2}
	{\rho(e)=\varnothing}
	{\Gamma;\varnothing\vdash\dyn{lam}~x.e:\hat\tau_1\rightarrow\hat\tau_2}
\quad
\prftree[r]{\bf ty-lam-l}
	{(\Gamma;\Delta),x:\hat\tau_1\vdash e:\hat\tau_2}
	{\Gamma;\Delta\vdash\dyn{lam}~x.e:\hat\tau_1\multimap\hat\tau_2}
\\
\prftree[r]{\bf ty-app-i}
	{\stackbox{
		$\Gamma;\Delta_2\vdash e_2:\hat\tau_1$\\
		$\Gamma;\Delta_1\vdash e_1:\hat\tau_1\rightarrow\hat\tau_2$}}
	{\Gamma;\Delta_1,\Delta_2\vdash\dyn{app}(e_1,e_2):\hat\tau_2}
\quad
\prftree[r]{\bf ty-app-l}
	{\stackbox{
		$\Gamma;\Delta_2\vdash e_2:\hat\tau_1$\\
		$\Gamma;\Delta_1\vdash e_1:\hat\tau_1\multimap\hat\tau_2$ }}
	{\Gamma;\Delta_1,\Delta_2\vdash\dyn{app}(e_1,e_2):\hat\tau_2}
\\
\prftree[r]{\bf ty-if}
	{\Gamma;\Delta_1\vdash e:\type{bool}}
	{\Gamma;\Delta_2\vdash e_1:\hat\tau}
	{\Gamma;\Delta_2\vdash e_2:\hat\tau}
	{\rho(e_1)=\rho(e_2)}
	{\Gamma;\Delta_1,\Delta_2\vdash\dyn{if}~e~\dyn{then}~e_1~\dyn{else}~e_2:\hat\tau}
\\
\prftree[r]{\bf ty-pool}
	{\varnothing;\varnothing\vdash\Pi(0):\hat\tau}
	{\varnothing;\varnothing\vdash\Pi(t):\type{1}~\textrm{for each}~t\in\dom(\Pi)\backslash\{0\}}
	{\varnothing;\varnothing\vdash\Pi:\hat\tau}
\end{gather*}
\end{figure}

 \begin{figure}
\caption{Some Static Constants ($\scc$) in $\Ldep$}
\label{fig:scx}
\[
\begin{array}{rclrcl}
\times      & : & (\sort{type}, \sort{type})   \Rightarrow \sort{type}     & \otimes     & : & (\sort{vtype}, \sort{vtype}) \Rightarrow \sort{vtype}    \\
\rightarrow & : & (\sort{vtype}, \sort{vtype}) \Rightarrow \sort{type}       & \multimap   & : & (\sort{vtype}, \sort{vtype}) \Rightarrow \sort{vtype} \\
\guard      & : & (\sort{bool}, \sort{type}) \Rightarrow \sort{type}       & \guard      & : & (\sort{bool}, \sort{vtype}) \Rightarrow \sort{vtype} \\
\assert     & : & (\sort{bool}, \sort{type}) \Rightarrow \sort{type}       & \assert     & : & (\sort{bool}, \sort{vtype}) \Rightarrow \sort{vtype} \\
\forall     & : & (\sigma \rightarrow \sort{type}) \Rightarrow \sort{type} & \forall     & : & (\sigma \rightarrow \sort{vtype}) \Rightarrow \sort{vtype} \\
\exists     & : & (\sigma \rightarrow \sort{type}) \Rightarrow \sort{type} & \exists     & : & (\sigma \rightarrow \sort{vtype}) \Rightarrow \sort{vtype} \\
\type{int}  & : & () \Rightarrow \sort{type}                               & \type{int}  & : & (\sort{int}) \Rightarrow \sort{type} \\
\type{bool} & : & () \Rightarrow \sort{type}                               & \type{bool} & : & (\sort{bool}) \Rightarrow \sort{type} \\
\top        & : & () \Rightarrow \sort{bool}                               & \bot        & : & () \Rightarrow \sort{bool} \\
\subtype    & : & (\sort{type}, \sort{type}) \Rightarrow \sort{bool}       & \subtype    & : & (\sort{vtype}, \sort{vtype}) \Rightarrow \sort{bool} \\
\type{1}    & : & () \Rightarrow \sort{type}
\end{array}
\]
\end{figure} \begin{figure}[h!]
\caption{Additional Definition of $\rho(\cdot)$ in $\Ldep$}
\label{fig:rho2}
\[
\begin{array}{rclrcl}
\rho(\guardi(v))                                      &=& \rho(v)                              &
\rho(\foralli(v))                                                 &=& \rho(v)                          \\
\rho(\guarde(e))                                      &=& \rho(e)                              &
\rho(\foralle(e))                                              &=& \rho(e)                     \\
\rho(\assert(e))                                     &=& \rho(e)                              &
\rho(\exists(e))                            &=& \rho(e) \\
\rho(\dyn{let}~\assert(x)=e_1~\dyn{in}~e_2) &=& \rho(e_1)\uplus\rho(e_2)             &
\rho(\dyn{let}~\exists(x)=e_1~\dyn{in}~e_2) &=& \rho(e_1)\uplus\rho(e_2)             
\end{array}
\]
\end{figure}

\begin{figure}[!h]
\caption{Additional Typing Rules of $\Ldep$}
\label{fig:mtlcdeptyping2}
\begin{gather*}
\prftree[r]{\bf ty-$\forall$-intr}
	{\Sigma,a:\sigma;\vv P;\Gamma;\Delta\vdash v:\hat\tau}
	{\Sigma;\vv P;\Gamma;\Delta\vdash\foralli(v):\forall a{:}\sigma.\hat\tau}
\qquad
\prftree[r]{\bf ty-$\forall$-elim}
	{\stackbox{
		$\Sigma\vdash s:\sigma$\\
		$\Sigma;\vv P;\Gamma;\Delta\vdash e:\forall a{:}\sigma.\hat\tau$
	}}
	{\Sigma;\vv P;\Gamma;\Delta\vdash\foralle(e):\hat\tau[a\mapsto s]}
\\
\prftree[r]{\bf ty-$\exists$-intr}
	{\stackbox{
		$\Sigma\vdash s:\sigma$\\
		$\Sigma;\vv P;\Gamma;\Delta\vdash e:\hat\tau[a\mapsto s]$
	}}
	{\Sigma;\vv P;\Gamma;\Delta\vdash\exists(e):\exists a{:}\sigma.\hat\tau}
\quad
\prftree[r]{\bf ty-$\exists$-elim}
	{\stackbox{
		$\Sigma;\vv P;\Gamma;\Delta\vdash e_1:\exists a{:}\sigma.\hat\tau_1$\\
		$\Sigma,a:\sigma;\vv P;(\Gamma;\Delta),x:\hat\tau_1\vdash e_2:\hat\tau_2$
	}}
	{\Sigma;\vv P;\Gamma;\Delta\vdash\dyn{let}~\exists(x)=e_1~\dyn{in}~e_2 : \hat\tau_2}
\\
\prftree[r]{\bf ty-$\guard$-intr}
	{\Sigma;\vv P,P';\Gamma;\Delta\vdash e:\hat\tau}
	{\Sigma;\vv P;\Gamma;\Delta\vdash \guardi(e):P'\guard\hat\tau}
\quad
\prftree[r]{\bf ty-$\guard$-elim}
	{\Sigma;\vv P;\Gamma;\Delta\vdash e:P'\guard\hat\tau}
	{\Sigma;\vv P\vdash P'}
	{\Sigma;\vv P;\Gamma;\Delta\vdash\guarde(e):\hat\tau}
\\
\prftree[r]{\bf ty-$\assert$-intr}
	{\stackbox{
		$\Sigma;\vv P\vdash P'$\\
		$\Sigma;\vv P;\Gamma;\Delta\vdash e:\hat\tau$
	}}
	{\Sigma;\vv P;\Gamma;\Delta\vdash \assert(e) : P'\assert\hat\tau}
\quad
\prftree[r]{\bf ty-$\assert$-elim}
	{\stackbox{
		$\Sigma;\vv P;\Gamma;\Delta\vdash e_1:P'\assert\hat\tau_1$\\
		$\Sigma;\vv P,P';(\Gamma;\Delta),x:\hat\tau\vdash e_2:\hat\tau_2$
	}}
	{\Sigma;\vv P;\Gamma;\Delta\vdash\dyn{let}~\assert(x)=e_1~\dyn{in}~e_2:\hat\tau_2}
\\
\prftree[r]{\bf ty-sub}
	{\Sigma;\vv P;\Gamma;\Delta\vdash e:\hat\tau_1}
	{\Sigma;\vv P\vdash\hat\tau_1\subtype\hat\tau_2}
	{\Sigma;\vv P;\Gamma;\Delta\vdash e:\hat\tau_2}
\end{gather*}
\end{figure} \begin{figure}[h!]
\caption{Extended Dynamic Constants in $\Lchan$}
\label{fig:sessionapi}
\begin{align*}
\texttt{create}  &: \forall r_1,r_2{:}\sort{role}.\forall\pi{:}\sort{stype}.(r_1\neq r_2)\guard(\type{chan}(r_2,\pi)\multimap\type{1})\Rightarrow\type{chan}(r_1,\pi) \\
\texttt{send}    &: \forall r,r_0{:}\sort{role}.\forall\pi{:}\sort{stype}.\forall\hat\tau{:}\sort{vtype}.
					\\&\qquad\qquad (r=r_0)\guard(\chan(r,\msg(r_0,\hat\tau)\coloncolon\pi), \hat\tau)\Rightarrow\chan(r,\pi) \\
\texttt{recv}    &: \forall r,r_0{:}\sort{role}.\forall\pi{:}\sort{stype}.\forall\hat\tau{:}\sort{vtype}.
					\\&\qquad\qquad (r\neq r_0)\guard\chan(r,\msg(r_0,\hat\tau\coloncolon\pi))\Rightarrow\hat\tau\otimes\chan(r,\pi) \\
\texttt{close}   &: \forall r,r_0{:}\sort{role}.(r=r_0)\guard\chan(r,\stype{end}(r_0))\Rightarrow\type{1} \\
\texttt{wait}    &: \forall r,r_0{:}\sort{role}.(r\neq r_0)\guard\chan(r,\stype{end}(r_0))\Rightarrow\type{1} \\
\texttt{offer}   &: \forall r,r_0{:}\sort{role}.\forall \pi_1,\pi_2{:}\sort{stype}.(r\neq r_0)\guard\chan(r,\stype{branch}(r_0,\pi_1,\pi_2)
                              \\&\qquad\qquad \Rightarrow\exists b{:}\sort{bool}.\type{bool}(b)\otimes\chan(r,\ite(b,\pi_1,\pi_2))\\
\texttt{choose}  &: \forall r,r_0{:}\sort{role}.\forall \pi_1,\pi_2{:}\sort{stype}.\forall b{:}\sort{bool}.(r=r_0) \guard(\chan(r,\stype{branch}(r_0,\pi_1,\pi_2)), \type{bool}(b)) 
                              \\&\qquad\qquad \Rightarrow \chan(r,\ite(b,\pi_1,\pi_2)) \\
\texttt{unify}   &: \forall r,r_0{:}\sort{role}.\forall\pi{:}\sort{stype}.\forall f{:}\sigma\rightarrow\sort{stype}.
                              \\&\qquad\qquad (r= r_0)\guard\chan(r,\quan(r_0,f))\Rightarrow\forall s{:}\sigma.\chan(r,f(s)) \\
\texttt{exify}   &: \forall r,r_0{:}\sort{role}.\forall\pi{:}\sort{stype}.\forall f{:}\sigma\rightarrow\sort{stype}.
                              \\&\qquad\qquad (r\neq r_0)\guard\chan(r,\quan(r_0,f))\Rightarrow\exists s{:}\sigma.\chan(r,f(s)) \\
\texttt{itet}    &: \forall r{:}\sort{role}.\forall \pi_1,\pi_2{:}\sort{stype}.\chan(r,\stype{ite}(\top,\pi_1,\pi_2))\Rightarrow\chan(r,\pi_1)\\
\texttt{itef}    &: \forall r{:}\sort{role}.\forall \pi_1,\pi_2{:}\sort{stype}.\chan(r,\stype{ite}(\bot,\pi_1,\pi_2))\Rightarrow\chan(r,\pi_2)\\
\texttt{recurse} &: \forall r{:}\sort{role}.\forall f{:}\sort{stype}\rightarrow\sort{stype}.\chan(r,\fix(f))\Rightarrow\chan(r,f(\fix(f))) \\
\texttt{cut}     &: \forall r_1,r_2{:}\sort{role}.\forall\pi{:}\sort{stype}.(r_1\neq r_2)\guard(\chan(r_1,\pi),\chan(r_2,\pi))\Rightarrow\type{1}
\end{align*}
\end{figure} \begin{figure}[h!]
\caption{Additional Evaluation Context for $\Ldep$}
\label{fig:mtlcdepevalctx}
\begin{align*}
\text{evaluation context}~E \defeq~& \dotsb \mid \guarde(E) \mid \foralle(E) \mid \\
								   &\assert(E) \mid \dyn{let}~\assert(x)=E~\dyn{in}~e \mid\\
                                   & \exists(E) \mid \dyn{let}~\exists(x)=E~\dyn{in}~e 
\end{align*}
\end{figure}

\begin{figure}[h!]
\caption{Reductions on Pools in $\Lchan$, Part A}
\label{fig:sessionreduction}
\addtolength{\jot}{.1em}
To distinguish linear channels, we assign a natural number $i$ to each channel as an identifier. We use $ch$ to range over linear channels, $ch_i$ for a channel with identifier $i$, and $ch_{i,r_1}$/$ch_{i,r_2}$ for its dual endpoints of role $r_1$/$r_2$, respectively. Assuming $i$ is some channel identifier and $r_1,r_2$ are two different roles. Assuming $v$ is some value, $b$ is some boolean value.
\begin{gather*}
\prftree[r]{\bf pr-create}
  {\Pi(t)=E[\texttt{create}(\dyn{lam}~x.e)]}
  {\Pi\rightarrow\Pi[t\colonequals E[ch_{i,r_2}]][t'\mapsto\dyn{app}(\dyn{lam}~x.e, ch_{i,r_1})]}
\\
\prftree[r]{\bf pr-end}
  {\Pi(t_1)=E[\texttt{close}(ch_{i,r_1})]}
  {\Pi(t_2)=E[\texttt{wait}(ch_{i,r_2})]}
  {\Pi\rightarrow\Pi[t_1\colonequals E[\langle\rangle]][t_2\colonequals E[\langle\rangle]]}
\\
\prftree[r]{\bf pr-msg}
  {\Pi(t_1)=E[\texttt{send}(ch_{i,r_1}, v)]}
  {\Pi(t_2)=E[\texttt{recv}(ch_{i,r_2})]}
  {\Pi\rightarrow\Pi[t_1\colonequals E[ch_{i,r_1}]][t_2\colonequals E[\langle v,ch_{i,r_2}\rangle]]}
\\
\prftree[r]{\bf pr-branch}
  {\Pi(t_1)=E[\texttt{choose}(ch_{i,r_1}, b)]}
  {\Pi(t_2)=E[\texttt{offer}(ch_{i,r_2})]}
  {\Pi\rightarrow\Pi[t_1\colonequals E[ch_{i,r_1}]][t_2\colonequals E[\langle b,ch_{i,r_2}\rangle]]}
\end{gather*}
\end{figure}

\begin{figure}[h!]
\caption{Reductions on Pools in $\Lchan$, Part B, \tt cut}
\label{fig:sessionreductioncut}
\addtolength{\jot}{.1em}
\begin{gather*}
\text{Let $e$ be $\texttt{cut}(ch_{i,r_2}, ch_{j,r_1})$, $r_1\neq r_2$, and $i\neq j$}\\
\\
\prftree[r]{\bf pr-cut-end}
  {\Pi(t_1)=E[\texttt{close}(ch_{i,r_1})]}
  {\Pi(t)=E[e]}
  {\Pi(t_2)=E[\texttt{wait}(ch_{j,r_2})]}
  {\Pi\rightarrow\Pi[t_1\colonequals E[\langle\rangle]][t\colonequals E[\langle\rangle]][t_2\colonequals E[\langle\rangle]]}
\\
\prftree[r]{\bf pr-cut-msg}
  {\Pi(t_1)=E[\texttt{send}(ch_{i,r_1}, v)]}
  {\Pi(t)=E[e]}
  {\Pi(t_2)=E[\texttt{recv}(ch_{j,r_2})]}
  {\Pi\rightarrow\Pi[t_1\colonequals E[ch_{i,r_1}]][t\colonequals E[e]][t_2\colonequals E[\langle v,ch_{j,r_2}\rangle]]}
\\
\prftree[r]{\bf pr-cut-branch}
  {\Pi(t_1)=E[\texttt{choose}(ch_{i,r_1}, b)]}
  {\Pi(t)=E[e]}
  {\Pi(t_2)=E[\texttt{offer}(ch_{j,r_2})]}
  {\Pi\rightarrow\Pi[t_1\colonequals E[ch_{i,r_1}]][t\colonequals E[e]][t_2\colonequals E[\langle b,ch_{j,r_2}\rangle]]}
\end{gather*}
\end{figure} \vfill
\rule[\baselineskip]{0pt}{\baselineskip}

\clearpage





\bibliography{./library}



\end{document}